\documentclass[5p,times,numbers,sort&compress]{elsarticle}  
\usepackage{graphicx}
\usepackage{booktabs}
\usepackage{amsmath,amssymb}
\usepackage{newtxtext,newtxmath} 

\newtheorem{assumption}{Assumption}
\usepackage{pgfgantt}
\usepackage{hyperref}
\usepackage{mathrsfs}
\usepackage{subcaption}
\usepackage{textcomp}
\usepackage{xcolor}
\usepackage{enumitem}
\usepackage{multirow}

\newtheorem{theorem}{Theorem}
\newtheorem{proposition}{Proposition}
\newproof{pf}{Proof}

\usepackage{algorithm}        
\usepackage{algpseudocode}    
\usepackage[most]{tcolorbox}  
\usepackage{caption}          
\algrenewcommand\algorithmicrequire{\textbf{Inputs:}}
\algrenewcommand\algorithmicensure{\textbf{Output:}}
\usepackage{stfloats}

\journal{Applied Energy}   

\begin{document}

\begin{frontmatter}

\title{End-to-End Differentiable Predictive Control with Probabilistic Constraint-Satisfaction Guarantees for Building Demand Response}

\author[1]{Kaipeng Xu}
\author[1]{Zhuo Zhi}
\author[1]{Ruixuan Zhao}
\author[1]{Keyue Jiang}

\address[1]{Department of Electronic and Electrical Engineering, University College London, Torrington Place, London WC1E 7JE, United Kingdom}

\begin{abstract}
Building Demand Response (DR) benefits from Model Predictive Control (MPC) because thermal flexibility can be optimized over a future horizon while enforcing operational constraints, but the repeated online optimization required by MPC motivates Differentiable Predictive Control (DPC) as a computationally efficient explicit-policy alternative. Conventional DPC, however, separates model identification from policy optimization and remains sensitive to plant--model mismatch. This paper proposes an End-to-End DPC (E2E-DPC) framework that jointly optimizes the learned dynamics model and control policy using prediction, constraint, and economic objectives. Stage-wise conformal prediction (CP) calibrates recursive prediction errors to construct horizon-dependent tightened state constraints, while independent Hoeffding certification provides a high-confidence lower bound on plant-level satisfaction of the original state and input constraints under receding-horizon deployment. The framework is evaluated in a high-fidelity residential building simulation using EnergyPlus through the Energym interface and compared with online MPC and conventional DPC. Constraint tightening eliminates thermal violations for both DPC variants in the stress test and reduces violations by 99.4\% for online MPC. Compared with conventional DPC, E2E-DPC reduces the mean constraint-tightening margin at the first prediction stage by 31.0\% and increases the median certified plant-level lower bound from 0 to 0.814. The results further reveal a trade-off between stronger certification of plant-level constraint satisfaction and the ability to maintain nonempty tightened state sets as the prediction horizon increases, with stronger multi-step identification mitigating this trade-off.
\end{abstract}

\begin{keyword}
Differentiable Predictive Control, Building Demand Response, End-to-End Learning, Conformal Prediction, Constraint Tightening, Probabilistic Certification
\end{keyword}

\end{frontmatter}

\section{Introduction}
\label{sec:introduction}
Buildings represent one of the largest global energy consumers, accounting for approximately 40\% of total energy demand and over 30\% of CO\(_2\) emissions~\cite{hu2017survey,eu_energy2023}. Among various building loads, Heating, Ventilation, and Air Conditioning (HVAC) systems are particularly significant, consuming nearly 50\% of total building energy~\cite{center2021annual}. The substantial energy footprint and inherent thermal storage capacity of buildings therefore provide an important source of flexibility for Demand Response (DR) programs. As defined in~\cite{albadi2007demand}, DR involves changes in electricity usage by end-users in response to price signals. For buildings, this allows the thermal mass to act as a virtual battery; by pre-heating or pre-cooling during off-peak hours, energy consumption can be shifted away from high-price periods to reduce costs and support grid stability~\cite{ma2012demand,vedullapalli2019combined}.

Model Predictive Control (MPC) is well suited to HVAC-based DR because it can explicitly account for multivariable dynamics and operational constraints while anticipating future system behaviour over a finite prediction horizon~\cite{drgovna2020all}. Its effectiveness, however, depends strongly on the quality of the predictive model. While first-principles building models are often time-consuming to develop and require substantial domain expertise~\cite{moriyasu2021structured}, data-driven models based on machine learning provide a flexible alternative for representing complex whole-building thermal dynamics~\cite{gonzalez2017data,azuatalam2020reinforcement}. This flexibility introduces an important computational challenge: MPC repeatedly solves a finite-horizon optimization problem online, and embedding expressive nonlinear neural predictors can further increase the associated computational burden. Although tractability-oriented model architectures can substantially reduce the solution time of learning-based building MPC~\cite{xu2026input}, the finite-horizon optimization must still be solved anew at every control instant.

Differentiable Predictive Control (DPC) provides an alternative route to reducing this online burden. Rather than solving the predictive control problem online at every control instant, DPC formulates the closed-loop prediction and control problem as a differentiable computational graph and optimizes an explicit neural control policy offline~\cite{drgovna2022differentiable,drgovna2024learning}. In the conventional data-driven formulation for unknown nonlinear systems, a neural dynamics model is first identified from measured trajectories and subsequently held fixed while the control policy is optimized by backpropagating an MPC-inspired finite-horizon loss through the learned model~\cite{drgovna2022differentiable}. Once trained, the resulting policy can be evaluated directly during deployment, shifting the main computational burden from repeated online optimization to offline learning.

Despite these advantages, two limitations remain central to data-driven DPC. First, the conventional identify-then-optimize procedure decouples system identification from the downstream control objective. A model optimized for predictive accuracy is not necessarily the model that yields the best downstream control performance, an issue commonly referred to as \emph{objective mismatch}~\cite{lambert2020objective}. Once the dynamics model has been identified and frozen, gradients associated with constraint satisfaction and economic performance can no longer refine its representation for the control task. Second, because policy optimization relies on an approximate learned dynamics model, nominal constraint satisfaction on model-generated trajectories does not necessarily translate into constraint satisfaction on the physical plant. This limitation has already been observed in building applications of DPC under plant--model mismatch~\cite{drgovna2021building}.

Several subsequent studies have introduced mechanisms for improving the reliability of DPC policies~\cite{mukherjee2022neural,drgovna2024learning,cortez2022differentiable}. These methods provide important probabilistic, Lyapunov-based, or barrier-based guarantees, but typically rely on explicitly available nominal dynamics, additional model-dependent ingredients, or online safety intervention. Such assumptions are less natural in data-driven DPC based on identified nonlinear models, where residual plant--model prediction errors remain unavoidable and must be accounted for when enforcing constraints on the deployed policy.

To address these challenges, this paper develops an \textbf{End-to-End Differentiable Predictive Control (E2E-DPC) framework with probabilistic constraint-satisfaction guarantees}. The proposed framework jointly optimizes the learned dynamics model and explicit control policy so that prediction fidelity and downstream control performance can jointly shape the learned representation. An Encoder-Only Transformer (EoT) is employed as the differentiable dynamics model to represent the nonlinear temporal behaviour of the residential building. To account explicitly for the remaining mismatch between the learned predictor and the plant, recursive prediction errors are calibrated using stage-wise conformal prediction (CP) to construct horizon-dependent tightened state constraints. After model and policy selection, an independent Hoeffding certification stage provides a high-confidence lower bound on the plant-level probability of satisfying the original state and input constraints for the receding-horizon transition actually applied.

The main contributions of this work are threefold:
\begin{itemize}
    \item \textbf{Performance-oriented E2E-DPC:} A unified training paradigm is developed to jointly optimize the EoT dynamics model and the explicit DPC policy. One-step and recursive multi-step prediction losses provide complementary predictive anchors, while constraint and economic objectives allow downstream control performance to refine the learned dynamics representation. A two-stage warm start, adaptive gradient coupling, and Projecting Conflicting Gradients (PCGrad) are used to stabilize joint model--policy optimization while preserving a prediction-oriented anchor for the learned dynamics.

    \item \textbf{Data-driven probabilistic constraint certification under plant--model mismatch:} A stage-wise conformal constraint-tightening mechanism is developed directly from recursive plant--model prediction errors. The calibrated error radii act as constraint-tightening margins in the virtual rollout, while an independent Hoeffding certification stage yields a high-confidence plant-level lower bound on original state and input constraint satisfaction for the receding-horizon transition actually executed. The resulting framework does not require a pre-designed Lyapunov function, invariant terminal set, control barrier function, or auxiliary online safety optimization.

    \item \textbf{High-fidelity empirical evaluation:} The proposed framework is evaluated in a residential building DR problem using a high-fidelity EnergyPlus-based closed-loop simulation. Controlled comparisons among Online MPC, Conventional DPC, and E2E-DPC assess the closed-loop effect of constraint tightening, while an extensive hyperparameter study evaluates the stage-wise calibrated error radii and plant-level constraint-satisfaction certificates. The experiments further reveal a trade-off between stronger certification of plant-level constraint satisfaction and the ability to preserve nonempty tightened state sets as the prediction horizon increases, together with the role of multi-step identification in mitigating this trade-off.
\end{itemize}

The remainder of this paper is organized as follows. Section~\ref{sec:related_work} reviews DPC, performance-oriented and E2E learning, Transformer-based dynamics modeling, and data-driven approaches to constraint satisfaction under model uncertainty. Section~\ref{sec:e2e_dpc_framework} presents the proposed E2E-DPC formulation, loss design, and training procedure. Section~\ref{sec:constraint_tightening} develops the stage-wise conformal constraint-tightening and probabilistic certification framework. Section~\ref{sec:case_study} evaluates the proposed approach in the high-fidelity building DR case study. Finally, Section~\ref{sec:conclusion} summarizes the main findings and discusses limitations and future research directions.

\section{Related Work}
\label{sec:related_work}
\subsection{Differentiable Predictive Control}

MPC determines the control action by repeatedly solving a finite-horizon optimization problem online. Explicit Model Predictive Control (eMPC) provides an alternative implementation in which the parametric solution of the MPC problem is pre-computed offline, allowing the control action to be obtained by direct evaluation of an explicit control law during deployment. However, classical eMPC based on multiparametric programming can suffer from rapidly increasing solution complexity and memory requirements as the problem dimension grows~\cite{drgovna2022differentiable}. DPC provides a learning-based route to explicit predictive control by instead optimizing a neural control policy offline through a differentiable closed-loop model~\cite{drgovna2022differentiable,drgovna2024learning}. In the conventional data-driven formulation for unknown nonlinear systems, the dynamics model is first identified from measured trajectories using a multi-step prediction objective and is subsequently held fixed while the control policy is optimized by backpropagating an MPC-inspired finite-horizon loss through the learned model~\cite{drgovna2022differentiable}. This shifts the main computational burden from online optimization to offline learning and enables rapid policy evaluation during deployment. DPC has since been demonstrated across diverse applications, including multi-zone building control~\cite{drgovna2021building}, power-system economic dispatch~\cite{king2022koopman}, and large-scale urban traffic networks~\cite{tumu2024differentiable}.

Multi-step identification is well motivated for predictive control because it penalizes prediction errors accumulated over model rollouts. Nevertheless, the conventional identify-then-optimize structure introduces two limitations relevant to the present work. First, once system identification is completed, gradients associated with downstream control performance no longer refine the dynamics representation; a model optimized purely for identification accuracy is therefore not necessarily the most suitable representation for the subsequent control task. Second, the original nonlinear DPC formulation incorporates constraint-related penalties both in the multi-term system identification loss and again in the control-policy loss~\cite{drgovna2022differentiable}. Consequently, operational constraints influence two separately optimized stages, rather than being evaluated within a unified control-oriented training objective. Moreover, because the policy is optimized through the identified model rather than the physical plant, its closed-loop behavior remains sensitive to plant--model mismatch. This issue was explicitly identified in the multi-zone building study of~\cite{drgovna2021building}, where plant--model mismatch was reported as a limitation and left for future investigation. These observations motivate the performance-oriented E2E formulation discussed in the following subsection.

Several approaches have subsequently been proposed to improve the reliability of DPC policies. Mukherjee et al.~\cite{mukherjee2022neural} introduced Neural Lyapunov DPC, which jointly learns a control policy and a neural Lyapunov function and provides sampling-based probabilistic guarantees using Hoeffding's inequality. The formulation assumes an available and controllable nominal dynamics model, while extensions to learned identification models and plant--model mismatch were explicitly left as future work. Drgoňa et al.~\cite{drgovna2024learning} developed a broader probabilistic verification framework based on closed-loop indicator functions and Hoeffding's inequality; its stability analysis additionally assumes the existence of a local Lyapunov function, an invariant terminal set, and a feasible terminal controller. Cortez et al.~\cite{cortez2022differentiable} instead combine DPC with sampled-data control barrier functions to obtain robust safety guarantees, but require a prescribed barrier construction, model-dependent bounds, and an event-triggered online safety intervention. More recently, Viljoen et al.~\cite{viljoen2024differentiable} proposed a data-driven predictive safety filter that constructs a safe set from DPC training trajectories and modifies the nominal policy online when necessary; the authors note that the finite-data safe-set construction relies on assumptions that are not rigorous in general.

These developments provide complementary routes toward reliable DPC, but rely on additional structural assumptions, model-dependent ingredients, or online safety mechanisms. None of these approaches directly calibrates observed plant--model prediction errors of an identified dynamics model into stage-wise constraint tightening for DPC. The framework developed in this work addresses the two preceding limitations through separate but complementary components: E2E model--policy optimization removes the identify-then-optimize separation, while CP calibrates residual plant--model errors for constraint tightening and independent certification provides a lower bound on plant-level constraint satisfaction.

\subsection{Performance-Oriented and End-to-End Learning for Control}
\label{subsec:e2e_related_work}

A fundamental limitation of decoupled model-based control is the mismatch between the objective used for system identification and the objective of the downstream control task. Lambert et al.~\cite{lambert2020objective} formalized this issue as \emph{objective mismatch}: improving one-step model likelihood or prediction accuracy does not necessarily improve the performance obtained when the model is subsequently used for planning or control. This observation motivates task-oriented model learning, in which the downstream control objective contributes directly to the optimization of the learned dynamics representation.

Differentiable optimization provides a natural mechanism for such training. Amos et al.~\cite{amos2018differentiable} demonstrated that gradients can be propagated through an MPC controller to learn both dynamics and objective parameters in an E2E manner. More recently, task-oriented surrogate-model learning has been investigated directly in nonlinear MPC. Mayfrank et al.~\cite{mayfrank2024end} showed that surrogate models trained purely for system-identification accuracy can be suboptimal when embedded in economic nonlinear MPC, whereas E2E optimization of the surrogate model for the downstream control task can improve closed-loop performance. These results support replacing the conventional identify-then-optimize pipeline with a training paradigm in which prediction fidelity and finite-horizon control performance jointly shape the learned dynamics model.

Joint model--policy optimization, however, introduces a multi-objective optimization problem for the shared dynamics parameters~\cite{sener2018multi}. In such settings, gradients associated with different objectives may differ substantially in magnitude or conflict in direction, complicating naive gradient aggregation~\cite{chen2018gradnorm,yu2020gradient}. Existing multi-task learning approaches address this issue either by adapting loss weights, as in GradNorm~\cite{chen2018gradnorm} and uncertainty-based weighting~\cite{kendall2018multi}, or by directly modifying conflicting gradients. PCGrad~\cite{yu2020gradient} projects conflicting gradient components and provides a computationally simple gradient-surgery mechanism.

The framework developed in this work uses a two-stage warm-start and adaptive gradient-coupling strategy. The dynamics model is first trained using the prediction-oriented objective until its validation identification loss reaches a plateau, after which joint model--policy training is activated. During joint training, PCGrad~\cite{yu2020gradient} removes conflicting components of the finite-horizon gradient, while a discrete coupling coefficient regulates its contribution to the dynamics-model update. The coupling strength is adjusted according to validation performance to balance preservation of the predictive anchor with the influence of control-relevant information on the learned dynamics representation.
\subsection{Transformer-Based Dynamics Modeling for Building Control}
\label{subsec:transformer_related_work}

Accurate data-driven building control requires dynamics models capable of representing temporal dependencies and interactions among zone temperatures, HVAC actions, energy consumption, and exogenous conditions. Transformer architectures provide an alternative to recurrent sequence models by using self-attention to represent dependencies across observations while allowing parallel processing of the historical sequence~\cite{vaswani2017attention}. Recent studies have demonstrated the effectiveness of Transformer-based architectures for building cooling-load, indoor-temperature, and energy-consumption forecasting~\cite{li2023transformer,spencer2025transfer,gu2025large}. Encoder-based Transformer models have also been explored for building-energy prediction and sequential HVAC control~\cite{spencer2025transfer,kim2025realtime}, supporting the use of self-attention encoders for extracting control-relevant temporal representations from building operational data.

Most of these studies primarily consider forecasting or employ the learned model as a fixed predictive component. In the present work, an EoT instead serves as the differentiable dynamics model within DPC and is recursively evaluated over the prediction horizon while being jointly refined with the control policy. This requires an architecture that is expressive enough to represent nonlinear temporal dynamics while remaining compact and fully differentiable for repeated rollout and E2E optimization. The task-specific rationale for adopting an encoder-only rather than an encoder--decoder Transformer is detailed in Section~\ref{subsec:eot_dynamics}.

\subsection{Robust and Data-Driven Constraint Tightening under Model Uncertainty}
\label{subsec:robust_constraint_related_work}

Robust constraint satisfaction is a central challenge in MPC when the predictive model is affected by disturbances or model mismatch, and becomes particularly important in learning-based MPC where part or all of the dynamics may be identified from data~\cite{hewing2020learning}. Classical robust formulations include min--max MPC, which explicitly optimizes against worst-case uncertainty but can lead to computationally demanding nonlinear optimization problems~\cite{raimondo2009min}. Tube-based MPC provides a more tractable alternative by restricting a nominal trajectory to tightened constraints such that the uncertain trajectory remains within the original admissible set~\cite{mayne2005robust}. While constructive tube and constraint-tightening procedures are well established for linear systems~\cite{mayne2005robust,chisci2001systems}, their extension to general nonlinear dynamics is substantially more difficult and can require restrictive system-theoretic ingredients or lead to conservative uncertainty propagation~\cite{marruedo2002input,mayne2011tube,kohler2018novel}.

Two representative nonlinear formulations illustrate this difficulty. Marruedo et al.~\cite{marruedo2002input} model plant--model mismatch as bounded additive uncertainty and propagate its effect using a Lipschitz constant $L_f$ of the nominal dynamics. The resulting $j$-step prediction-error bound scales as $\sum_{i=0}^{j-1}L_f^i$, and the corresponding uncertainty set is subtracted from the nominal state constraints. The construction is transparent, but the authors note that global Lipschitz propagation can be overly conservative; their robust MPC formulation additionally employs a terminal cost, terminal constraint, local controller, and Lyapunov-based terminal ingredients. Köhler et al.~\cite{kohler2018novel} avoid the unbounded growth of this Lipschitz-based tightening by exploiting local incremental stabilizability and an achievable contraction rate $\rho<1$, resulting in a bounded growing tube. Although the resulting tightening itself only requires the decay-rate estimate, establishing the required incremental stabilizability property and a suitable contraction bound remains a nontrivial model-dependent task for general nonlinear systems.

These difficulties are amplified for learned sequence models. Seel et al.~\cite{seel2021neural} show that a neural nonlinear autoregressive model with exogenous inputs predictor depending on past inputs and outputs can be represented as a state-space model by augmenting the state with its historical variables. The EoT dynamics model considered in this work requires an analogous history augmentation. Under a standard unweighted norm, the resulting augmented transition contains a shift operation that directly copies part of the historical state to the next augmented state and is therefore not strictly contractive in those directions. Consequently, a Lipschitz-based error bound of the form used in~\cite{marruedo2002input} grows at least linearly with the prediction stage when $L_f=1$ and geometrically when $L_f>1$, which can rapidly reduce or even eliminate the tightened feasible region. Weighted norms or local bounds can reduce this conservatism~\cite{marruedo2002input}, but obtaining sufficiently tight analytical bounds for a high-dimensional learned EoT model remains difficult. Similarly, applying the incremental-stabilizability approach of~\cite{kohler2018novel} would require establishing a suitable contraction property for the history-augmented learned dynamics, which is not naturally available during E2E model--policy learning.

These considerations motivate a data-driven alternative that calibrates prediction uncertainty directly from observed model errors rather than propagating a global worst-case sensitivity bound. CP provides model-agnostic prediction regions with finite-sample marginal coverage under exchangeability and can be applied on top of arbitrary regression models~\cite{lei2018distribution}. Of particular relevance, Chee et al.~\cite{chee2024uncertainty} combine conformal uncertainty quantification with dynamic constraint tightening to robustify learned model-based controllers. Their framework computes an uncertainty vector online and applies the same current uncertainty width across the reference-prediction horizon; the estimated uncertainty is deliberately not propagated with the look-ahead stage in order to promote feasibility. Inspired by this data-calibrated tightening principle, the present work instead calibrates a separate constraint-tightening margin for each prediction stage using recursive multi-step rollout errors. This directly represents the horizon-dependent error accumulation of the learned EoT model and integrates the resulting tightening into E2E-DPC training, while avoiding explicit Lipschitz or incremental-stability calculations and additional online reference-generation optimization.

\section{The End-to-End Differentiable Predictive Control Framework}
\label{sec:e2e_dpc_framework}

This section details the proposed E2E-DPC framework. The framework is designed to jointly learn the system dynamics and a control policy through E2E training. It builds upon the structure of DPC~\cite{drgovna2022differentiable,drgovna2024learning}, while introducing a performance-oriented unified training paradigm that addresses limitations of conventional decoupled DPC training. A stage-dependent conformal constraint-tightening mechanism and its probabilistic certification are introduced in Section~\ref{sec:constraint_tightening}.

\subsection{General Problem Formulation}
\label{sec:pocp_formulation}

The E2E-DPC framework addresses a parametric optimal control problem (pOCP), where the objective is to learn the parameters of both the dynamics model and the control policy simultaneously to minimize an expected cumulative cost. Let $N_p$ denote the prediction horizon, define the local prediction-stage index set as $\mathcal{K}:=\{0,\ldots,N_p-1\}$, and let $\mathbf{U}:=\{\mathbf{u}_k\}_{k\in\mathcal{K}}$ denote the control sequence over the horizon. At each receding-horizon decision instant, the local stage index is reset such that $k=0$ denotes the current decision stage. Let $\mathbf{y}_0\in\mathbb{R}^{n_y}$ denote the current measured model-variable vector, $\hat{\mathbf{y}}_k\in\mathbb{R}^{n_y}$ its recursively predicted continuation, and $\hat{\mathbf{x}}_k\in\mathbb{R}^{n_x}$ the subvector of $\hat{\mathbf{y}}_k$ subject to operational state constraints. The corresponding measured and recursively generated historical input windows are denoted by $\mathcal{H}_0$ and $\widehat{\mathcal{H}}_k$, respectively. The cost-parameter sequence and horizon-level pOCP parameter vector are defined as
\begin{equation}
\begin{aligned}
\boldsymbol{\pi}&:=[\pi_0,\ldots,\pi_{N_p-1}]^\top,\\
\boldsymbol{\eta}&:=[\boldsymbol{\pi}^\top,\mathbf{x}_\ell^\top,\mathbf{x}_h^\top,\mathbf{u}_\ell^\top,\mathbf{u}_h^\top]^\top\in\mathbb{R}^{n_\eta}.
\end{aligned}
\label{eq:pocp_parameters}
\end{equation}

Unless otherwise stated, all stage-wise relations below are imposed for every $k\in\mathcal{K}$. The general formulation is as follows:

\begin{subequations}
\label{eq:pocp}
\begin{align}
\min_{\theta_x,\theta_u}\quad
&\mathbb{E}_{\mathcal{H}_0\sim\mathcal{P}_{\mathcal{H}_0},\,\boldsymbol{\eta}\sim\mathcal{P}_{\eta}}
\left[
\sum_{k\in\mathcal{K}}
\ell(\hat{\mathbf{y}}_{k+1},\mathbf{u}_k,\pi_k)
\right]
\label{eq:pocp_objective}\\
\mathrm{s.t.}\quad
&\hat{\mathbf{y}}_{k+1}
=f_{\theta_x}(\widehat{\mathcal{H}}_k,\mathbf{u}_k),
\label{eq:pocp_dynamics}\\
&\widehat{\mathcal{H}}_{k+1}
=\operatorname{shift}\!\left(
\widehat{\mathcal{H}}_k,
\hat{\mathbf{y}}_{k+1},
\mathbf{u}_k
\right),
\label{eq:pocp_history_update}\\
&\mathbf{U}=f_{\theta_u}(\mathbf{y}_0,\boldsymbol{\eta}),
\label{eq:pocp_policy}\\
&\hat{\mathbf{x}}_{k+1}\in
\mathcal{X}:=
\{\mathbf{x}\in\mathbb{R}^{n_x}\mid
\mathbf{x}_\ell\leq\mathbf{x}\leq\mathbf{x}_h\},
\label{eq:pocp_state_constraints}\\
&\mathbf{u}_k\in
\mathcal{U}:=
\{\mathbf{u}\in\mathbb{R}^{n_u}\mid
\mathbf{u}_\ell\leq\mathbf{u}\leq\mathbf{u}_h\},
\label{eq:pocp_input_constraints}\\
&\widehat{\mathcal{H}}_0=\mathcal{H}_0,\qquad
\hat{\mathbf{y}}_0=\mathbf{y}_0.
\label{eq:pocp_initial_condition}
\end{align}
\end{subequations}

The objective in~\eqref{eq:pocp_objective} minimizes the expected cumulative stage cost over a finite prediction horizon of length $N_p$. The expectation is taken over the measured initial history, $\mathcal{P}_{\mathcal{H}_0}$, and the horizon-level pOCP parameter vector, $\mathcal{P}_{\eta}$. At prediction stage $k$, the control input $\mathbf{u}_k$ induces the one-step-ahead model-variable prediction $\hat{\mathbf{y}}_{k+1}$, and $\pi_k$ denotes the cost parameter associated with this transition. The optimization is performed jointly over the dynamics-model parameters $\theta_x$ and the policy parameters $\theta_u$.

The dynamics and history-update mappings are defined in~\eqref{eq:pocp_dynamics} and~\eqref{eq:pocp_history_update}. The notation $f_{\theta_x}(\widehat{\mathcal{H}}_k,\mathbf{u}_k)$ represents evaluation of the history-based predictor after incorporating the current control input into the current model-input window. The resulting complete model-variable prediction $\hat{\mathbf{y}}_{k+1}$ is then written back together with $\mathbf{u}_k$ to construct the next recursive window $\widehat{\mathcal{H}}_{k+1}$. The constrained state $\hat{\mathbf{x}}_{k+1}$ is the designated subvector of $\hat{\mathbf{y}}_{k+1}$ on which the state constraints are imposed. The explicit policy $f_{\theta_u}$ maps the current measured model-variable vector $\mathbf{y}_0$ and the horizon-level pOCP parameters to the complete control sequence $\mathbf{U}$. The vector $\boldsymbol{\eta}$ in~\eqref{eq:pocp_parameters} collects problem information that may condition the policy; parameters that remain fixed for a particular application need not be supplied explicitly as network inputs.

The present pOCP uses a finite prediction horizon without an explicit terminal constraint set or terminal penalty. Uncertainty is handled through the stage-dependent conformal tightening sequence developed in Section~\ref{sec:constraint_tightening}. The complete tightening sequence is used throughout the virtual prediction horizon for policy synthesis, whereas the probabilistic plant-level certificate is formulated for the first receding-horizon transition that is actually executed.

\subsection{Differentiable Closed-Loop System}

The core of the E2E-DPC framework is a differentiable representation of the closed-loop system, enabling gradient-based optimization of the dynamics model and control policy through the unrolled prediction horizon. The framework consists of two primary learnable components: the system dynamics model $f_{\theta_x}$ and the control policy $f_{\theta_u}$.

\subsubsection{System Dynamics Model}
\label{subsec:eot_dynamics}

The dynamics model $f_{\theta_x}$ predicts the next complete model-variable vector from a fixed-length historical window and the current control input, as represented in~\eqref{eq:pocp_dynamics}. Based on the suitability of self-attention for multivariate building time-series modeling discussed in Section~\ref{subsec:transformer_related_work}, an EoT is adopted as the differentiable dynamics model.

The encoder-only architecture is aligned with the prediction task considered here. The original Transformer employs an encoder--decoder structure for sequence-to-sequence generation~\cite{vaswani2017attention}, whereas $f_{\theta_x}$ performs a many-to-one regression from a historical window of length $N_h$ to the next-step model-variable vector. The self-attention encoder therefore constructs a context-aware representation of the historical sequence, which is mapped directly to the one-step prediction through a regression head; a separate autoregressive decoder is not required. Multi-step predictions are instead obtained by recursively applying the same one-step model and updating the historical window according to~\eqref{eq:pocp_dynamics}--\eqref{eq:pocp_history_update}. This architecture provides a compact differentiable module for repeated finite-horizon rollout while allowing the dynamics representation to be jointly refined with the control policy.

\subsubsection{Control Policy}

The control policy $f_{\theta_u}$ is an explicit neural network that maps the currently available system information and the pOCP parameters to the complete future control sequence $\mathbf{U}$ defined in Section~\ref{sec:pocp_formulation}. In the implementation considered in this work, the policy is evaluated once at the beginning of each virtual rollout. An $L$-layer feed-forward implementation is written as:

\begin{subequations}
\label{eq:policy_network}
\begin{align}
\mathbf{U}&=f_{\theta_u}(\mathbf{y}_0,\boldsymbol{\eta})=H_L\mathbf{z}_L+\mathbf{b}_L,\\
\mathbf{z}_l&=\sigma(H_{l-1}\mathbf{z}_{l-1}+\mathbf{b}_{l-1}),\qquad l=1,\ldots,L,\\
\mathbf{z}_0&=[\mathbf{y}_0^\top,\boldsymbol{\eta}^\top]^\top.
\end{align}
\end{subequations}

where $\mathbf{z}_l$ denotes the hidden-layer activation, $\sigma$ is an element-wise activation function, and $\theta_u=\{H_l,\mathbf{b}_l\}_{l=0}^{L}$ collects the learnable weights and biases. The output layer has dimension $N_p\times n_u$ and is reshaped into the $N_p$-stage control sequence $\mathbf{U}$. 

Together, the dynamics model and the policy form a differentiable closed-loop system where the model-variable trajectory is recursively generated as:
\begin{subequations}
\label{eq:virtual_rollout}
\begin{align}
\hat{\mathbf{y}}_{k+1}
&=f_{\theta_x}(\widehat{\mathcal{H}}_k,\mathbf{u}_k),
\qquad k\in\mathcal{K},\\
\widehat{\mathcal{H}}_{k+1}
&=\operatorname{shift}\!\left(
\widehat{\mathcal{H}}_k,
\hat{\mathbf{y}}_{k+1},
\mathbf{u}_k
\right),\\
\widehat{\mathcal{H}}_0&=\mathcal{H}_0,\qquad
\hat{\mathbf{y}}_0=\mathbf{y}_0.
\end{align}
\end{subequations}

The policy is evaluated once at the beginning of the prediction horizon to generate the complete control sequence $\mathbf{U}$. The successive elements of this sequence are then incorporated into the recursive historical windows and propagated through $f_{\theta_x}$ to generate the $N_p$-step virtual model-variable trajectory. At each stage, the constrained-state vector $\hat{\mathbf{x}}_k$ is obtained as the corresponding constrained subvector of $\hat{\mathbf{y}}_k$. The differentiability of this computational graph with respect to both $\theta_x$ and $\theta_u$ enables the finite-horizon control loss to update the dynamics model and the control policy jointly.

\subsection{Performance-Oriented Loss Design}
To solve the pOCP in~\eqref{eq:pocp}, its objective and constraints are represented by a composite differentiable loss. Four complementary terms are used: a one-step identification loss, a multi-step rollout identification loss, a closed-loop constraint-violation loss, and a finite-horizon control objective. Let $n_b$ denote the mini-batch size. The overall objective is
\begin{equation}
\mathcal{L}_{\mathrm{e2e}}=\lambda_{\mathrm{id}}\mathcal{L}_{\mathrm{id}}+\mathcal{L}_{\mathrm{dpc}}.
\label{eq:e2e_loss}
\end{equation}
\begin{equation}
\mathcal{L}_{\mathrm{dpc}}=\lambda_{\mathrm{ms}}\mathcal{L}_{\mathrm{ms}}+\lambda_{\mathrm{cons}}\mathcal{L}_{\mathrm{cons}}+\lambda_{\mathrm{obj}}\mathcal{L}_{\mathrm{obj}}.
\label{eq:dpc_loss}
\end{equation}

\begin{itemize}
\item \textbf{One-Step Identification Loss ($\mathcal{L}_{\mathrm{id}}$):} The one-step identification loss anchors the local predictive fidelity of $f_{\theta_x}$ using measured history--target pairs:
\begin{equation}
\mathcal{L}_{\mathrm{id}}
=
\frac{1}{n_b}
\sum_{i=1}^{n_b}
\left\|
\mathbf{y}_{j+1}^{(i)}
-
f_{\theta_x}\!\left(
\mathcal{H}_{j}^{(i)},
\mathbf{u}_{j,\mathrm{d}}^{(i)}
\right)
\right\|_2^2.
\label{eq:id_loss}
\end{equation}
where $j$ indexes the sampled one-step transition and $\mathbf{u}_{j,\mathrm{d}}^{(i)}$ denotes the corresponding recorded control input.

\item \textbf{Multi-Step Rollout Identification Loss ($\mathcal{L}_{\mathrm{ms}}$):} During joint training, the dynamics model is additionally evaluated in a free-running $N_p$-step rollout under the recorded future controls. For sample $i$,
\begin{subequations}
\label{eq:ms_rollout}
\begin{align}
\widetilde{\mathcal{H}}_0^{(i)}
&=\mathcal{H}_0^{(i)},\\
\widetilde{\mathbf{y}}_{k+1}^{(i)}
&=
f_{\theta_x}\!\left(
\widetilde{\mathcal{H}}_k^{(i)},
\mathbf{u}_{k,\mathrm{d}}^{(i)}
\right),
\qquad k\in\mathcal{K},\\
\widetilde{\mathcal{H}}_{k+1}^{(i)}
&=
\operatorname{shift}\!\left(
\widetilde{\mathcal{H}}_k^{(i)},
\widetilde{\mathbf{y}}_{k+1}^{(i)},
\mathbf{u}_{k,\mathrm{d}}^{(i)}
\right).
\end{align}
\end{subequations}
where $\mathbf{u}_{k,\mathrm{d}}^{(i)}$ denotes the recorded control applied in the data. The multi-step loss is
\begin{equation}
\mathcal{L}_{\mathrm{ms}}
=
\frac{1}{n_bN_p}
\sum_{i=1}^{n_b}
\sum_{k\in\mathcal{K}}
\left\|
\mathbf{y}_{k+1}^{(i)}
-
\widetilde{\mathbf{y}}_{k+1}^{(i)}
\right\|_2^2.
\label{eq:ms_loss}
\end{equation}
\item \textbf{Constraint Violation Loss ($\mathcal{L}_{\mathrm{cons}}$):} The state and input constraints in~\eqref{eq:pocp_state_constraints} and~\eqref{eq:pocp_input_constraints} are incorporated through squared-ReLU penalties evaluated on the policy-generated virtual rollout:
\begin{equation}
\begin{split}
\mathcal{L}_{\mathrm{cons}}=\frac{1}{n_bN_p}\sum_{i=1}^{n_b}\sum_{k\in\mathcal{K}}\Big(&\|\operatorname{ReLU}(\hat{\mathbf{x}}_{k+1}^{(i)}-\mathbf{x}_h)\|_2^2\\
&+\|\operatorname{ReLU}(\mathbf{x}_\ell-\hat{\mathbf{x}}_{k+1}^{(i)})\|_2^2\\
&+\|\operatorname{ReLU}(\mathbf{u}_k^{(i)}-\mathbf{u}_h)\|_2^2\\
&+\|\operatorname{ReLU}(\mathbf{u}_\ell-\mathbf{u}_k^{(i)})\|_2^2\Big).
\end{split}
\label{eq:constraint_loss}
\end{equation}
\item \textbf{Objective Loss ($\mathcal{L}_{\mathrm{obj}}$):} The objective loss is the mini-batch approximation of the expected finite-horizon cost in~\eqref{eq:pocp_objective}:
\begin{equation}
\mathcal{L}_{\mathrm{obj}}
=
\frac{1}{n_b}
\sum_{i=1}^{n_b}
\sum_{k\in\mathcal{K}}
\ell\!\left(
\hat{\mathbf{y}}_{k+1}^{(i)},
\mathbf{u}_k^{(i)},
\pi_k^{(i)}
\right).
\label{eq:objective_loss}
\end{equation}
\end{itemize}

The two prediction-fidelity terms play complementary roles. $\mathcal{L}_{\mathrm{id}}$ preserves local one-step predictive fidelity, whereas $\mathcal{L}_{\mathrm{ms}}$ penalizes the accumulation of prediction errors during recursive finite-horizon rollouts. The latter is particularly relevant to predictive control, since errors accumulated over the prediction horizon directly affect the virtual trajectories used for control-policy optimization. This trajectory-level objective is also consistent with the original DPC formulation in~\cite{drgovna2022differentiable}, where the neural state-space model is identified by comparing measured and predicted output trajectories over multiple prediction steps.

A central design choice of the proposed framework is to separate prediction fidelity from operational penalties and the task-level control objective. In the nonlinear DPC formulation of~\cite{drgovna2022differentiable}, constraint-related penalties are included both in the multi-term system-identification loss and in the subsequent control-policy loss. In the present formulation, $\mathcal{L}_{\mathrm{id}}$ and $\mathcal{L}_{\mathrm{ms}}$ do not directly contain operational constraint penalties, while constraint handling and the task-level control objective are represented by $\mathcal{L}_{\mathrm{cons}}$ and $\mathcal{L}_{\mathrm{obj}}$ on the policy-generated virtual rollout. This avoids duplicating operational constraint penalties across model identification and policy optimization and allows their influence to be controlled explicitly through the corresponding loss weights.

This separation does not decouple model and policy learning. During the joint stage, the control-relevant multi-step prediction term $\mathcal{L}_{\mathrm{ms}}$ and the policy-rollout terms $\mathcal{L}_{\mathrm{cons}}$ and $\mathcal{L}_{\mathrm{obj}}$ are combined in $\mathcal{L}_{\mathrm{dpc}}$ and jointly influence the dynamics model. Meanwhile, $\mathcal{L}_{\mathrm{cons}}$ and $\mathcal{L}_{\mathrm{obj}}$ additionally update the control policy through the differentiable virtual rollout. The learned dynamics representation can therefore be refined according to both predictive fidelity and its effect on finite-horizon control performance. The effect of this loss design and joint model--policy optimization is evaluated in Section~\ref{sec:case_study}.

\subsection{End-to-End Training Procedure}
\label{subsec:e2e_training}

The dynamics model and the control policy are optimized offline using a two-stage training procedure, as summarized in Figure~\ref{fig:e2e_training_diagram}. The first stage trains only the dynamics model until its validation identification loss satisfies a plateau criterion. The second stage activates joint E2E training, in which both models are updated using finite-horizon differentiable rollouts. This criterion-based warm start provides the dynamics model with a predictive initialization before the finite-horizon DPC gradient is introduced into its parameter updates, thereby reducing abrupt gradient interference at the transition to joint training. The duration of the first stage is therefore determined by validation performance rather than by a prescribed number of training epochs.

The procedure begins with the \textbf{Initial Dynamics Model Training} phase, as depicted in the upper panel of Figure~\ref{fig:e2e_training_diagram}. During this stage, the control policy is inactive and only the dynamics model $f_{\theta_x}$ is trained on measured one-step transitions. As illustrated by the green arrows, forward propagation produces a one-step-ahead prediction, which is compared with the measured next-step model-variable vector to compute the identification loss $\mathcal{L}_{\mathrm{id}}$ in~\eqref{eq:id_loss}. The resulting gradient is backpropagated through the dynamics model, as indicated by the red arrow, to update only $\theta_x$.

After each training epoch, $\mathcal{L}_{\mathrm{id}}$ is evaluated on the validation set and appended to the validation-loss history. Once at least $N_{\mathrm{pl}}$ validation values have been recorded, the relative improvement between the oldest and newest values in the latest $N_{\mathrm{pl}}$ records is evaluated. Joint training is activated from the following epoch when this relative improvement falls below the plateau threshold $\varepsilon_{\mathrm{pl}}$.

\begin{figure*}[t]
    \centering
    \includegraphics[width=\textwidth]{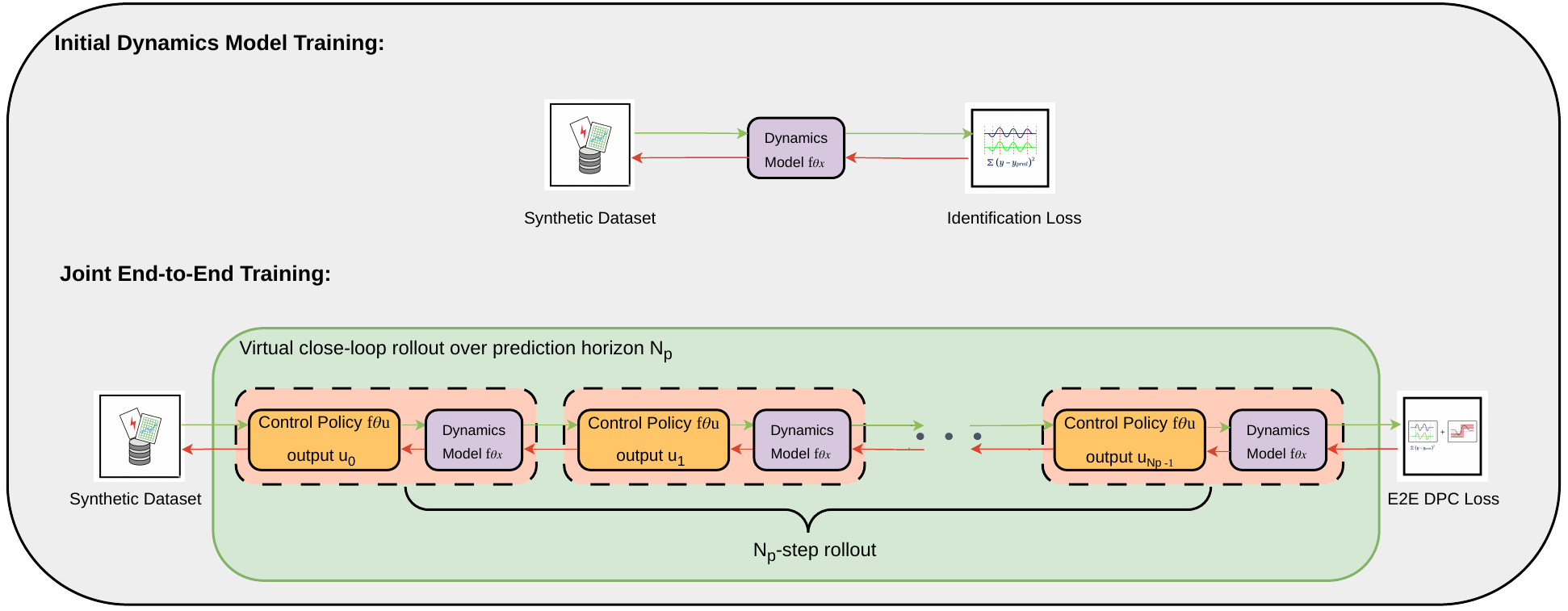}
    \caption{The two-stage E2E-DPC training procedure. Green arrows denote forward information flow used to construct predictions and evaluate the corresponding losses, whereas red arrows denote backward gradient propagation used to update the learnable parameters. In the upper panel, only the dynamics model $f_{\theta_x}$ is trained using the one-step identification loss $\mathcal{L}_{\mathrm{id}}$, while the control policy remains inactive. Once the validation identification loss satisfies the plateau criterion, joint E2E training is activated. In the lower panel, $f_{\theta_x}$ is recursively evaluated under recorded future controls to compute the multi-step loss $\mathcal{L}_{\mathrm{ms}}$, while $f_{\theta_u}$ generates the control sequence used in the policy-based virtual rollout for evaluating $\mathcal{L}_{\mathrm{cons}}$ and $\mathcal{L}_{\mathrm{obj}}$. These finite-horizon terms form $\mathcal{L}_{\mathrm{dpc}}$, whose gradient contributes to the dynamics-model update through PCGrad and the coupling coefficient $\beta$; its policy-dependent components additionally update $f_{\theta_u}$.}
\label{fig:e2e_training_diagram}
\end{figure*}

Once the validation identification loss satisfies the plateau criterion, the \textbf{Joint E2E Training} phase is activated, as illustrated in the lower panel of Figure~\ref{fig:e2e_training_diagram}. Two complementary finite-horizon rollouts are evaluated during each mini-batch. First, $f_{\theta_x}$ is recursively propagated under the recorded future control sequence to compute the multi-step identification loss $\mathcal{L}_{\mathrm{ms}}$ in~\eqref{eq:ms_loss}. Second, the policy $f_{\theta_u}$ is evaluated using the current measured model-variable vector $\mathbf{y}_0$ and the horizon-level pOCP parameters to generate the complete control sequence $\mathbf{U}$. Its successive elements are propagated through $f_{\theta_x}$ according to~\eqref{eq:virtual_rollout} to construct the policy-generated virtual trajectory and evaluate $\mathcal{L}_{\mathrm{cons}}$ and $\mathcal{L}_{\mathrm{obj}}$. Together, $\mathcal{L}_{\mathrm{ms}}$, $\mathcal{L}_{\mathrm{cons}}$, and $\mathcal{L}_{\mathrm{obj}}$ form the finite-horizon DPC loss $\mathcal{L}_{\mathrm{dpc}}$ defined in~\eqref{eq:dpc_loss}.

Because $\theta_x$ influences both one-step prediction and the finite-horizon rollouts, it receives two potentially conflicting gradient signals. Define
\begin{equation}
g_{\mathrm{id}}:=\nabla_{\theta_x}\!\left(\lambda_{\mathrm{id}}\mathcal{L}_{\mathrm{id}}\right),\qquad g_{\mathrm{dpc}}:=\nabla_{\theta_x}\mathcal{L}_{\mathrm{dpc}}.
\label{eq:task_gradients}
\end{equation}
Rather than directly summing these gradients, an asymmetric form of PCGrad~\cite{yu2020gradient} is used, with the identification gradient treated as the reference direction. The DPC gradient is projected only when the two gradients conflict:
\begin{equation}
\widetilde{g}_{\mathrm{dpc}}=g_{\mathrm{dpc}}-\frac{\min\{0,\langle g_{\mathrm{dpc}},g_{\mathrm{id}}\rangle\}}{\|g_{\mathrm{id}}\|_2^2+\varepsilon_{\mathrm{pc}}}g_{\mathrm{id}}.
\label{eq:pcgrad}
\end{equation}
where $\varepsilon_{\mathrm{pc}}>0$ is a small numerical constant. The resulting dynamics-model gradient is
\begin{equation}
g_x=g_{\mathrm{id}}+\beta\widetilde{g}_{\mathrm{dpc}}.
\label{eq:dynamics_update_gradient}
\end{equation}
The policy is updated using
\begin{equation}
g_u=\nabla_{\theta_u}\mathcal{L}_{\mathrm{dpc}}.
\label{eq:policy_update_gradient}
\end{equation}

Although $\mathcal{L}_{\mathrm{ms}}$ is control-relevant because it improves the recursive prediction behavior used in finite-horizon control, it is evaluated under recorded future controls and therefore has no direct dependence on $\theta_u$. Consequently, $\mathcal{L}_{\mathrm{ms}}$ contributes to the dynamics-model update but has zero direct contribution to $g_u$. In contrast, $\mathcal{L}_{\mathrm{cons}}$ and $\mathcal{L}_{\mathrm{obj}}$ depend on the policy-generated rollout and therefore update both the policy and, through the differentiable dynamics model, $f_{\theta_x}$. PCGrad and the coupling coefficient $\beta$ regulate the contribution of the complete finite-horizon DPC gradient to the dynamics-model update.

The coupling coefficient is selected from the discrete levels $\{\beta_1,\beta_2,\beta_3\}$. Joint training begins at the smallest level $\beta_1$, limiting the initial influence of the finite-horizon DPC gradient on the dynamics model. A transition to the next level is permitted only after the current level has been retained for a prescribed number of epochs, the validation identification loss has recovered sufficiently from its worst value observed after joint training was activated, and the composite validation objective satisfies the selected improvement criterion. If the composite validation objective subsequently fails to improve for a prescribed period, $\beta$ may be reduced by one level. This adaptive coupling schedule balances preservation of one-step predictive fidelity against the influence of the finite-horizon DPC objective on the learned dynamics representation. The complete two-stage procedure is summarized in Algorithm~\ref{alg:e2e_training}.

\refstepcounter{algorithm}\label{alg:e2e_training}
\begin{tcolorbox}[breakable,enhanced,
  title={Algorithm~\thealgorithm: E2E-DPC Training Procedure}]

{\small 
\begin{algorithmic}[1]

\Require Initial parameters $\theta_x$ and $\theta_u$; training and validation datasets; optimizers $\mathcal{O}_x$ and $\mathcal{O}_u$; plateau-detection window length $N_{\mathrm{pl}}$ and relative-improvement threshold $\varepsilon_{\mathrm{pl}}$; coupling levels $\{\beta_1,\beta_2,\beta_3\}$

\State \textbf{Stage 1: Initial dynamics-model training}
\State Initialize the validation identification-loss history $\mathcal{V}_{\mathrm{id}}\gets\varnothing$ and the joint-training flag $\mathrm{unlocked}\gets\mathrm{false}$
\While{$\mathrm{unlocked}=\mathrm{false}$}
\For{each mini-batch of measured one-step transitions}
\State Compute $\mathcal{L}_{\mathrm{id}}$ using~\eqref{eq:id_loss}
\State Update $\theta_x$ using $\nabla_{\theta_x}(\lambda_{\mathrm{id}}\mathcal{L}_{\mathrm{id}})$
\EndFor
\State Evaluate the validation identification loss and append it to $\mathcal{V}_{\mathrm{id}}$
\If{$\mathcal{V}_{\mathrm{id}}$ contains at least $N_{\mathrm{pl}}$ values and the relative improvement between the oldest and newest values in its latest $N_{\mathrm{pl}}$ records is below $\varepsilon_{\mathrm{pl}}$}
\State Set $\mathrm{unlocked}\gets\mathrm{true}$
\EndIf
\EndWhile

\Statex
\State \textbf{Stage 2: Joint E2E training}
\State Initialize $\beta\gets\beta_1$ and activate joint updates of $\theta_x$ and $\theta_u$
\For{each joint-training epoch}
\For{each mini-batch}
\State Sample initial history windows, measured one-step targets, recorded future model-variable--control trajectories, and horizon-level pOCP parameters
\State Compute $\mathcal{L}_{\mathrm{id}}$
\State Recursively roll out $f_{\theta_x}$ under the recorded future controls and compute $\mathcal{L}_{\mathrm{ms}}$ using~\eqref{eq:ms_loss}
\State $\mathbf{U}^{(i)}\gets f_{\theta_u}(\mathbf{y}_0^{(i)},\boldsymbol{\eta}^{(i)})$ for $i=1,\ldots,n_b$
\State Set $\widehat{\mathcal{H}}_0^{(i)}\gets\mathcal{H}_0^{(i)}$ and $\hat{\mathbf{y}}_0^{(i)}\gets\mathbf{y}_0^{(i)}$
\For{$k\in\mathcal{K}$}
\State Extract $\mathbf{u}_k^{(i)}$ from $\mathbf{U}^{(i)}$
\State $\hat{\mathbf{y}}_{k+1}^{(i)}\gets f_{\theta_x}(\widehat{\mathcal{H}}_k^{(i)},\mathbf{u}_k^{(i)})$
\State Extract $\hat{\mathbf{x}}_{k+1}^{(i)}$ from $\hat{\mathbf{y}}_{k+1}^{(i)}$
\State $\widehat{\mathcal{H}}_{k+1}^{(i)}\gets\operatorname{shift}(\widehat{\mathcal{H}}_k^{(i)},\hat{\mathbf{y}}_{k+1}^{(i)},\mathbf{u}_k^{(i)})$
\EndFor
\State Compute $\mathcal{L}_{\mathrm{cons}}$ and $\mathcal{L}_{\mathrm{obj}}$
\State $\mathcal{L}_{\mathrm{dpc}}\gets\lambda_{\mathrm{ms}}\mathcal{L}_{\mathrm{ms}}+\lambda_{\mathrm{cons}}\mathcal{L}_{\mathrm{cons}}+\lambda_{\mathrm{obj}}\mathcal{L}_{\mathrm{obj}}$
\State $g_{\mathrm{id}}\gets\nabla_{\theta_x}(\lambda_{\mathrm{id}}\mathcal{L}_{\mathrm{id}})$ and $g_{\mathrm{dpc}}\gets\nabla_{\theta_x}\mathcal{L}_{\mathrm{dpc}}$
\State Compute $\widetilde{g}_{\mathrm{dpc}}$ using~\eqref{eq:pcgrad}
\State $g_x\gets g_{\mathrm{id}}+\beta\widetilde{g}_{\mathrm{dpc}}$
\State $g_u\gets\nabla_{\theta_u}\mathcal{L}_{\mathrm{dpc}}$
\State Update $\theta_x$ using $\mathcal{O}_x$ and $g_x$, and update $\theta_u$ using $\mathcal{O}_u$ and $g_u$
\EndFor
\State Evaluate the validation identification loss and the composite validation objective
\State Update $\beta$ according to the identification-recovery, minimum-hold, composite-improvement, and reversion criteria
\EndFor

\Ensure Optimized dynamics model $f_{\theta_x}$ and control policy $f_{\theta_u}$

\end{algorithmic}
} 
\end{tcolorbox}

\section{Constraint Tightening and Probabilistic Certification}
\label{sec:constraint_tightening}

This section develops the probabilistic constraint-satisfaction mechanism for the proposed E2E-DPC framework. Stage-wise conformal calibration is used to construct a horizon-dependent tightening sequence from recursive model-rollout errors, and Hoeffding's inequality is subsequently used to certify nominal satisfaction of the tightened constraints. These two components are then combined to obtain the plant-level probabilistic constraint-satisfaction guarantee.

\subsection{Recursive Rollout Error Quantification}

To quantify the prediction uncertainty accumulated over the finite horizon, the learned dynamics model is evaluated recursively under recorded control sequences. This construction reproduces the recursive prediction mechanism used by the virtual rollout and yields a stage-dependent prediction error for each look-ahead step.

For sample $i$, let $\mathcal{H}_0^{(i)}$ denote the measured initial history window, let $\{\mathbf{u}_{k,\mathrm{d}}^{(i)}\}_{k\in\mathcal{K}}$ denote the recorded future control sequence, and let $\{\mathbf{y}_{k+1}^{(i)}\}_{k\in\mathcal{K}}$ denote the corresponding measured model-variable vectors, with $\mathbf{x}_{k+1}^{(i)}$ denoting their constrained-state subvectors. Starting from $\widehat{\mathcal{H}}_0^{(i)}=\mathcal{H}_0^{(i)}$, the learned dynamics model is recursively rolled out under the recorded controls:
\begin{subequations}
\label{eq:recursive_rollout_model}
\begin{align}
\hat{\mathbf{y}}_{k+1}^{(i)}
&=
f_{\theta_x}\!\left(
\widehat{\mathcal{H}}_k^{(i)},
\mathbf{u}_{k,\mathrm{d}}^{(i)}
\right),
\qquad k\in\mathcal{K},\\
\widehat{\mathcal{H}}_{k+1}^{(i)}
&=
\operatorname{shift}\!\left(
\widehat{\mathcal{H}}_k^{(i)},
\hat{\mathbf{y}}_{k+1}^{(i)},
\mathbf{u}_{k,\mathrm{d}}^{(i)}
\right).
\end{align}
\end{subequations}
At each prediction stage, $\hat{\mathbf{x}}_{k+1}^{(i)}$ is obtained as the constrained-state subvector of $\hat{\mathbf{y}}_{k+1}^{(i)}$. Consequently, the resulting state-prediction errors include both one-step model mismatch and its propagation through the recursive model-variable rollout.

For each prediction stage $k=1,\ldots,N_p$, define the nonconformity score
\begin{equation}
s_{i,k}:=\left\|\mathbf{x}_k^{(i)}-\hat{\mathbf{x}}_k^{(i)}\right\|_\infty=\max_{j=1,\ldots,n_x}\left|x_{k,j}^{(i)}-\hat{x}_{k,j}^{(i)}\right|.
\label{eq:nonconformity_score}
\end{equation}
Thus, one scalar score records the largest absolute prediction error among the constrained state components at stage $k$.

\subsection{Stage-Wise Conformal Tightening}
\label{sec:conformal_tightening}

Given $m$ recursive-rollout scores at prediction stage $k$, let $s_{(1),k}\leq\cdots\leq s_{(m),k}$ denote their order statistics. For a prescribed miscoverage level $\alpha\in(0,1)$, define the conformal rank
\begin{equation}
q_\alpha=\left\lceil(m+1)(1-\alpha)\right\rceil.
\label{eq:cp_rank}
\end{equation}
The corresponding stage-wise margin is
\begin{equation}
r_k=\begin{cases}s_{(q_\alpha),k},&q_\alpha\leq m,\\+\infty,&q_\alpha=m+1,\end{cases}
\label{eq:cp_margin}
\end{equation}
for $k=1,\ldots,N_p$. The complete tightening sequence is
\begin{equation}
\mathbf{r}:=[r_1,\ldots,r_{N_p}]^\top.
\label{eq:tightening_sequence}
\end{equation}
For each stage, the corresponding tightened state set is
\begin{equation}
\mathcal{X}_k^{\mathrm{tight}}:=\left\{\mathbf{x}\in\mathbb{R}^{n_x}\mid\mathbf{x}_\ell+r_k\mathbf{1}_{n_x}\leq\mathbf{x}\leq\mathbf{x}_h-r_k\mathbf{1}_{n_x}\right\}.
\label{eq:tightened_state_set}
\end{equation}

Geometrically, $r_k$ is the radius of the calibrated $\ell_\infty$ error ball at prediction stage $k$. Because the admissible state set is box-constrained, the same scalar is also the margin by which each state bound is shifted inward in~\eqref{eq:tightened_state_set}. Accordingly, the terms \emph{calibrated error radius} and \emph{constraint-tightening margin} refer to the same quantity $r_k$, with the former emphasizing prediction uncertainty and the latter emphasizing its effect on the admissible state set.

The input set remains $\mathcal{U}$ because the tightening is introduced to account for state-prediction mismatch rather than uncertainty in the generated control values.

After the final dynamics model is selected and frozen, let $\mathcal{D}_{\mathrm{cal}}$ denote an independent calibration set. Applying~\eqref{eq:cp_rank}--\eqref{eq:cp_margin} to the recursive-rollout scores from $\mathcal{D}_{\mathrm{cal}}$ yields the final sequence $\mathbf{r}^*:=[r_1^*,\ldots,r_{N_p}^*]^\top$.

\begin{assumption}[Calibration--deployment exchangeability]
\label{ass:cp_exchangeability}
For the frozen final dynamics model $f_{\theta_x^*}$, the calibration scores and the corresponding score of a future deployment scenario are exchangeable at each prediction stage $k=1,\ldots,N_p$.
\end{assumption}

Under Assumption~\ref{ass:cp_exchangeability}, standard split conformal calibration gives the stage-wise marginal coverage~\cite{lei2018distribution,angelopoulos2023conformal}
\begin{equation}
\mathbb{P}\!\left(\left\|\mathbf{x}_k-\hat{\mathbf{x}}_k\right\|_\infty\leq r_k^*\right)\geq1-\alpha,\qquad k=1,\ldots,N_p.
\label{eq:stagewise_cp_coverage}
\end{equation}

At deployment, the controller generates an $N_p$-step control sequence but only its first action is applied before the system state is measured again and a new sequence is generated. Consequently, the plant-level constraint-satisfaction statement for the transition actually executed at the current control instant requires only the first-stage conformal event together with first-stage nominal constraint satisfaction. The later-stage coverage events are therefore not required in the probability bound developed below.

The complete stage-wise tightening sequence is nevertheless retained for policy synthesis. The policy is trained through an $N_p$-step virtual rollout, so tightened constraints at later prediction stages shape the complete predicted trajectory and control sequence and can consequently affect the first control action. In addition, recursive prediction errors generally vary with the look-ahead stage, making stage-dependent margins more informative than a single horizon-wide margin. This differs from~\cite{chee2024uncertainty}, where the estimated uncertainty is explicitly not propagated along the prediction horizon.

More broadly, in constraint-tightening and tube-based robust MPC, recursive-feasibility and stability analyses are typically constructed around the feasibility and evolution of finite-horizon predicted trajectories under tightened constraints or tubes, together with suitable shift, invariance, terminal, or contraction arguments. Retaining a horizon-indexed tightening sequence therefore provides a natural structure for future extensions of the present framework toward such analyses~\cite{mayne2005robust,mayne2011tube,kohler2018novel}.

\subsection{Tightening-Conditioned E2E-DPC Training}

During joint training, the same recursive score construction is applied to a separate validation set using the current dynamics model. At epoch $e$, these scores are used with~\eqref{eq:cp_rank}--\eqref{eq:cp_margin} to obtain a provisional stage-wise sequence $\mathbf{r}^{(e)}$. This sequence provides the tightening information used to condition the policy during that training epoch; the final deployment sequence $\mathbf{r}^*$ is computed separately from $\mathcal{D}_{\mathrm{cal}}$ after model selection.

At epoch $e$, the pOCP parameter vector in Section~\ref{sec:e2e_dpc_framework} is augmented by the provisional tightening sequence:
\begin{equation}
\boldsymbol{\eta}^{\mathrm{G},(e)}:=\left[\boldsymbol{\eta}^\top,\left(\mathbf{r}^{(e)}\right)^\top\right]^\top.
\label{eq:augmented_eta_guarantee}
\end{equation}
The policy is evaluated as
\begin{equation}
\mathbf{U}=f_{\theta_u}\!\left(\mathbf{y}_0,\boldsymbol{\eta}^{\mathrm{G},(e)}\right).
\label{eq:tightening_conditioned_policy}
\end{equation}
The virtual rollout and all optimization terms remain as defined in Section~\ref{sec:e2e_dpc_framework}, except that the state-constraint penalties use
\begin{equation}
\mathbf{x}_\ell+r_{k+1}^{(e)}\mathbf{1}_{n_x}\leq\hat{\mathbf{x}}_{k+1}\leq\mathbf{x}_h-r_{k+1}^{(e)}\mathbf{1}_{n_x},\qquad k\in\mathcal{K}.
\label{eq:training_tightened_constraints}
\end{equation}
The two-stage schedule, PCGrad update, validation criteria, and coupling coefficient $\beta$ are otherwise unchanged.

\subsection{Hoeffding Certification of Tightened Nominal Satisfaction}

After $f_{\theta_x^*}$, $f_{\theta_u^*}$, and $\mathbf{r}^*$ are fixed, let $\mathcal{D}_{\mathrm{cert}}$ denote a separate certification sample containing $m_{\mathrm{cert}}$ deployment scenarios. Each scenario is evaluated only at the first receding-horizon transition. Define the overall indicator
\begin{equation}
I_i:=\begin{cases}1,&\hat{\mathbf{x}}_1^{(i)}\in\mathcal{X}_1^{\mathrm{tight},*}\text{ and }\mathbf{u}_0^{(i)}\in\mathcal{U},\\0,&\text{otherwise},\end{cases}
\label{eq:indicator_tight}
\end{equation}
where $\mathbf{u}_0^{(i)}$ is the first element of the control sequence generated by $f_{\theta_u^*}$ and $\hat{\mathbf{x}}_1^{(i)}$ is the corresponding one-step nominal prediction. Let
\begin{equation}
\widehat{\mu}_{\mathrm{tight}}:=\frac{1}{m_{\mathrm{cert}}}\sum_{i=1}^{m_{\mathrm{cert}}}I_i,
\label{eq:empirical_tight_probability}
\end{equation}
where $\mu_{\mathrm{tight}}$ denotes the true deployment-scenario probability that the first nominal prediction and the first generated control satisfy the corresponding tightened state and input constraints.

\begin{assumption}[Certification-sample independence]
\label{ass:hoeffding_iid}
The indicators $I_1,\ldots,I_{m_{\mathrm{cert}}}$ are independent and identically distributed (i.i.d.) Bernoulli random variables.
\end{assumption}
\begin{theorem}[Hoeffding lower bound]
\label{thm:hoeffding_tight}
Under Assumption~\ref{ass:hoeffding_iid}, with confidence at least $1-\delta_{\mathrm{H}}$ over the certification sample,
\begin{equation}
\mu_{\mathrm{tight}}\geq\underline{\mu}_{\mathrm{tight}}:=\max\left\{0,\widehat{\mu}_{\mathrm{tight}}-\sqrt{\frac{\ln(1/\delta_{\mathrm{H}})}{2m_{\mathrm{cert}}}}\right\}.
\label{eq:hoeffding_tight_lb}
\end{equation}
\end{theorem}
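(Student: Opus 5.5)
The plan is to apply the one-sided Hoeffding inequality to the bounded i.i.d.\ indicators of Assumption~\ref{ass:hoeffding_iid} and then invert the resulting tail bound. Under Assumption~\ref{ass:hoeffding_iid}, the indicators $I_1,\ldots,I_{m_{\mathrm{cert}}}$ are i.i.d.\ Bernoulli with common mean $\mathbb{E}[I_i]=\mu_{\mathrm{tight}}$. This holds because $\mu_{\mathrm{tight}}$ is, by definition, the deployment-scenario probability of the event in~\eqref{eq:indicator_tight}. Here it matters that $f_{\theta_x^*}$, $f_{\theta_u^*}$ and $\mathbf{r}^*$ were frozen before $\mathcal{D}_{\mathrm{cert}}$ was drawn, so the event being indicated does not depend on the certification sample. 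Each $I_i$ takes values in $[0,1]$, so Hoeffding's inequality applies with range $b-a=1$.

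For any $t>0$, the lower-tail form of Hoeffding's inequality gives
\begin{equation*}
\mathbb{P}\!\left(\widehat{\mu}_{\mathrm{tight}}-\mu_{\mathrm{tight}}\geq t\right)\leq\exp\!\left(-2m_{\mathrm{cert}}t^2\right).
\end{equation*}
Only this one-sided version is needed, since we want a lower bound on $\mu_{\mathrm{tight}}$. Setting the right-hand side equal to $\delta_{\mathrm{H}}$ and solving yields
\begin{equation*}
t=\sqrt{\ln(1/\delta_{\mathrm{H}})/(2m_{\mathrm{cert}})}.
\end{equation*}
Hence, with probability at least $1-\delta_{\mathrm{H}}$ over the certification sample, $\mu_{\mathrm{tight}}>\widehat{\mu}_{\mathrm{tight}}-t$. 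This gives the non-clipped part of~\eqref{eq:hoeffding_tight_lb}.

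The $\max\{0,\cdot\}$ clipping is harmless because $\mu_{\mathrm{tight}}$ is a probability and therefore $\mu_{\mathrm{tight}}\geq 0$ deterministically. Consequently, on the same high-probability event, $\mu_{\mathrm{tight}}\geq\max\{0,\widehat{\mu}_{\mathrm{tight}}-t\}=\underline{\mu}_{\mathrm{tight}}$.

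The only step needing care is the first: justifying that the common mean of the indicators equals the deployment quantity $\mu_{\mathrm{tight}}$. This identification requires that the certification scenarios be drawn from the deployment distribution and that all learned objects, including $\mathbf{r}^*$, be fixed independently of $\mathcal{D}_{\mathrm{cert}}$. I would state this explicitly as part of the interpretation of Assumption~\ref{ass:hoeffding_iid}. The remainder is a routine inversion of the Hoeffding bound.
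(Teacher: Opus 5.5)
Your proposal is correct and follows the paper's proof essentially verbatim. Both arguments apply Hoeffding's inequality to the i.i.d.\ Bernoulli indicators with mean $\mu_{\mathrm{tight}}$, invert at $\sqrt{\ln(1/\delta_{\mathrm{H}})/(2m_{\mathrm{cert}})}$, and clip at zero because $\mu_{\mathrm{tight}}\geq 0$. Your remark that $\mathbb{E}[I_i]=\mu_{\mathrm{tight}}$ requires $f_{\theta_x^*}$, $f_{\theta_u^*}$ and $\mathbf{r}^*$ to be frozen before $\mathcal{D}_{\mathrm{cert}}$ is drawn is a sensible clarification that the paper leaves implicit in Assumption~\ref{ass:hoeffding_iid}.
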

\begin{pf}
Under Assumption~\ref{ass:hoeffding_iid}, $I_1,\ldots,I_{m_{\mathrm{cert}}}$ are i.i.d. Bernoulli random variables with mean $\mu_{\mathrm{tight}}$. Hoeffding's inequality~\cite{hoeffding1963probability} gives
\[
\mathbb{P}\!\left(\mu_{\mathrm{tight}}<\widehat{\mu}_{\mathrm{tight}}-a\right)\leq\exp(-2m_{\mathrm{cert}}a^2).
\]
Setting $a=\sqrt{\ln(1/\delta_{\mathrm{H}})/(2m_{\mathrm{cert}})}$ and clipping the lower bound at zero yields~\eqref{eq:hoeffding_tight_lb}. \qed
\end{pf}

\subsection{Combined Plant-Level Guarantee}

Define the events
\begin{equation}
\mathcal{A}:=\left\{\hat{\mathbf{x}}_1\in\mathcal{X}_1^{\mathrm{tight},*},\ \mathbf{u}_0\in\mathcal{U}\right\},
\label{eq:event_nominal_tight}
\end{equation}
and
\begin{equation}
\mathcal{B}:=\left\{\left\|\mathbf{x}_1-\hat{\mathbf{x}}_1\right\|_\infty\leq r_1^*\right\}.
\label{eq:event_cp_error}
\end{equation}
\begin{proposition}[Tightening implication]
\label{prop:tightening_implication}
If $\mathcal{A}\cap\mathcal{B}$ occurs, then the plant state after the applied control action satisfies $\mathbf{x}_1\in\mathcal{X}$ and the applied input satisfies $\mathbf{u}_0\in\mathcal{U}$.
\end{proposition}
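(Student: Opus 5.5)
The argument is a deterministic componentwise bound: it only unpacks the definitions of $\mathcal{A}$, $\mathcal{B}$, $\mathcal{X}_1^{\mathrm{tight},*}$ and $\mathcal{X}$, with no probability involved. First, the input claim is immediate. $\mathcal{A}$ contains $\mathbf{u}_0\in\mathcal{U}$, and under receding-horizon deployment the applied input is exactly the first element $\mathbf{u}_0$ of the generated sequence. The input set is not tightened, so nothing more is needed there.

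For the state claim, I will fix a component index $j\in\{1,\ldots,n_x\}$. From $\mathcal{A}$ and the definition \eqref{eq:tightened_state_set} with $k=1$ and $r_1^*$, we have
\[
x_{\ell,j}+r_1^*\leq\hat{x}_{1,j}\leq x_{h,j}-r_1^*.
\]
From $\mathcal{B}$ and the $\ell_\infty$ definition used in \eqref{eq:nonconformity_score}, we have $|x_{1,j}-\hat{x}_{1,j}|\leq\|\mathbf{x}_1-\hat{\mathbf{x}}_1\|_\infty\leq r_1^*$. This gives
\[
\hat{x}_{1,j}-r_1^*\leq x_{1,j}\leq\hat{x}_{1,j}+r_1^*.
\]
Chaining the two inequalities yields $x_{\ell,j}\leq x_{1,j}\leq x_{h,j}$. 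Since $j$ was arbitrary, $\mathbf{x}_1\in\mathcal{X}$ by \eqref{eq:pocp_state_constraints}.

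The only step needing care is the degenerate margin $r_1^*=+\infty$ allowed by \eqref{eq:cp_margin}. In that case the tightened set $\mathcal{X}_1^{\mathrm{tight},*}$ is empty, so $\mathcal{A}$ cannot occur and the implication holds vacuously. Similarly, if $r_1^*$ is finite but the tightened box is empty, $\mathcal{A}$ is impossible and the statement is again vacuous.

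One point of interpretation should be stated explicitly. The variables $\mathbf{x}_1$ and $\hat{\mathbf{x}}_1$ in $\mathcal{B}$ are the plant state and the nominal prediction produced under the same applied input $\mathbf{u}_0$, from the same measured history. This matches how the scores were constructed: the model was rolled out under the controls actually applied. I do not expect any substantive obstacle beyond this bookkeeping.
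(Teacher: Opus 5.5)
Your proposal is correct and matches the paper's proof: the paper likewise turns $\mathcal{B}$ into the componentwise band $\hat{\mathbf{x}}_1-r_1^*\mathbf{1}_{n_x}\leq\mathbf{x}_1\leq\hat{\mathbf{x}}_1+r_1^*\mathbf{1}_{n_x}$, chains it with the tightened bounds from $\mathcal{A}$, and reads $\mathbf{u}_0\in\mathcal{U}$ directly off $\mathcal{A}$. Your extra remark on $r_1^*=+\infty$ (where the tightened set is empty, so $\mathcal{A}$ cannot occur) is a harmless refinement that the paper leaves implicit.
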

\begin{pf}
On $\mathcal{B}$, the definition of the infinity norm implies
\[
\hat{\mathbf{x}}_1-r_1^*\mathbf{1}_{n_x}\leq\mathbf{x}_1\leq\hat{\mathbf{x}}_1+r_1^*\mathbf{1}_{n_x}.
\]
On $\mathcal{A}$,
\[
\mathbf{x}_\ell+r_1^*\mathbf{1}_{n_x}\leq\hat{\mathbf{x}}_1\leq\mathbf{x}_h-r_1^*\mathbf{1}_{n_x}.
\]
Combining these inequalities gives $\mathbf{x}_\ell\leq\mathbf{x}_1\leq\mathbf{x}_h$, while $\mathbf{u}_0\in\mathcal{U}$ follows directly from $\mathcal{A}$. \qed
\end{pf}
Let $\mu_{\mathrm{plant}}:=\mathbb{P}(\mathbf{x}_1\in\mathcal{X},\mathbf{u}_0\in\mathcal{U})$ denote the true deployment-scenario probability of original-constraint satisfaction for the applied receding-horizon transition.
\begin{theorem}[Combined plant-level guarantee]
\label{thm:combined_guarantee}
Suppose Assumptions~\ref{ass:cp_exchangeability} and~\ref{ass:hoeffding_iid} hold. Then, with confidence at least $1-\delta_{\mathrm{H}}$ over the certification sample,
\begin{equation}
\mu_{\mathrm{plant}}\geq\underline{\mu}_{\mathrm{plant}}:=\max\left\{0,\underline{\mu}_{\mathrm{tight}}-\alpha\right\}.
\label{eq:combined_plant_lb}
\end{equation}
\end{theorem}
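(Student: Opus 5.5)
The plan is to combine Proposition~\ref{prop:tightening_implication} with a union bound, and then to insert the two probabilistic ingredients: the first-stage conformal coverage~\eqref{eq:stagewise_cp_coverage} and the Hoeffding bound of Theorem~\ref{thm:hoeffding_tight}. All probabilities are taken over a fresh deployment scenario, with $f_{\theta_x^*}$, $f_{\theta_u^*}$ and $\mathbf{r}^*$ held fixed. The events $\mathcal{A}$ and $\mathcal{B}$ in~\eqref{eq:event_nominal_tight}--\eqref{eq:event_cp_error} are therefore events in the same probability space as the one defining $\mu_{\mathrm{plant}}$.

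\textbf{Step 1 (deterministic inclusion).} Proposition~\ref{prop:tightening_implication} gives $\mathcal{A}\cap\mathcal{B}\subseteq\{\mathbf{x}_1\in\mathcal{X},\ \mathbf{u}_0\in\mathcal{U}\}$. Hence $\mu_{\mathrm{plant}}\geq\mathbb{P}(\mathcal{A}\cap\mathcal{B})$.

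\textbf{Step 2 (union bound, no independence needed).} Bonferroni gives
\[
\mathbb{P}(\mathcal{A}\cap\mathcal{B})\geq 1-\mathbb{P}(\mathcal{A}^c)-\mathbb{P}(\mathcal{B}^c)=\mathbb{P}(\mathcal{A})-\mathbb{P}(\mathcal{B}^c).
\]
By definition $\mathbb{P}(\mathcal{A})=\mu_{\mathrm{tight}}$. Under Assumption~\ref{ass:cp_exchangeability}, the stage $k=1$ case of~\eqref{eq:stagewise_cp_coverage} gives $\mathbb{P}(\mathcal{B}^c)\leq\alpha$. So $\mu_{\mathrm{plant}}\geq\mu_{\mathrm{tight}}-\alpha$ holds deterministically; it makes no reference to the certification sample. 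This is why no dependence between $\mathcal{A}$ and $\mathcal{B}$ needs to be assumed, even though both depend on the same scenario.

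\textbf{Step 3 (high-confidence step).} Theorem~\ref{thm:hoeffding_tight} shows that the event $\{\mu_{\mathrm{tight}}\geq\underline{\mu}_{\mathrm{tight}}\}$ has probability at least $1-\delta_{\mathrm{H}}$ over $\mathcal{D}_{\mathrm{cert}}$. On that event, $\mu_{\mathrm{plant}}\geq\underline{\mu}_{\mathrm{tight}}-\alpha$. Since $\mu_{\mathrm{plant}}\geq0$ trivially, we may take the maximum with $0$, which yields~\eqref{eq:combined_plant_lb} with the same confidence $1-\delta_{\mathrm{H}}$. No additional $\delta$ is spent, because the conformal statement is an unconditional probability bound rather than a sample-dependent one.

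\textbf{Main obstacle.} The delicate point is interpretive. The conformal coverage in~\eqref{eq:stagewise_cp_coverage} is marginal over both the calibration set and the deployment scenario, whereas $\mu_{\mathrm{plant}}$ and $\mu_{\mathrm{tight}}$ are probabilities conditional on the fixed $\mathbf{r}^*$. I would state explicitly that the bound is read in this marginal sense, i.e.\ with $\mathbb{P}$ in Step~2 jointly over $\mathcal{D}_{\mathrm{cal}}$ and the deployment scenario. Alternatively, $\mathbb{P}(\mathcal{B}^c\mid\mathbf{r}^*)\leq\alpha$ can be taken as the operative consequence of Assumption~\ref{ass:cp_exchangeability}.

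I would also check that the scenario distribution used in $\mathcal{B}$ coincides with the one generating the $I_i$ in Assumption~\ref{ass:hoeffding_iid}. The nominal prediction $\hat{\mathbf{x}}_1$ in $\mathcal{B}$ is produced under the policy's action $\mathbf{u}_0$, not under recorded controls. Assumption~\ref{ass:cp_exchangeability} must therefore be understood as covering the deployment score generated in closed loop; I would note this rather than try to prove it.
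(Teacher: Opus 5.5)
Your proof is correct and follows the paper's argument step for step: Proposition~\ref{prop:tightening_implication} and the union bound give $\mu_{\mathrm{plant}}\geq\mathbb{P}(\mathcal{A}\cap\mathcal{B})\geq\mathbb{P}(\mathcal{A})-\mathbb{P}(\mathcal{B}^c)\geq\mu_{\mathrm{tight}}-\alpha$, and then Theorem~\ref{thm:hoeffding_tight} with clipping at zero finishes the argument at the same confidence $1-\delta_{\mathrm{H}}$. Your two caveats are legitimate points that the paper leaves implicit in Assumption~\ref{ass:cp_exchangeability}: the conformal coverage is marginal rather than conditional on the fixed $\mathbf{r}^*$, and the deployment score is produced under the policy's action rather than under recorded controls.
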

\begin{pf}
By Proposition~\ref{prop:tightening_implication}, plant constraint satisfaction contains $\mathcal{A}\cap\mathcal{B}$. Hence
\[
\mu_{\mathrm{plant}}\geq\mathbb{P}(\mathcal{A}\cap\mathcal{B})\geq\mathbb{P}(\mathcal{A})-\mathbb{P}(\mathcal{B}^c).
\]
By definition, $\mathbb{P}(\mathcal{A})=\mu_{\mathrm{tight}}$, while the $k=1$ instance of~\eqref{eq:stagewise_cp_coverage} gives $\mathbb{P}(\mathcal{B}^c)\leq\alpha$. Therefore,
\[
\mu_{\mathrm{plant}}\geq\mu_{\mathrm{tight}}-\alpha.
\]
Theorem~\ref{thm:hoeffding_tight} gives $\mu_{\mathrm{tight}}\geq\underline{\mu}_{\mathrm{tight}}$ with confidence at least $1-\delta_{\mathrm{H}}$. Substitution and clipping at zero yield~\eqref{eq:combined_plant_lb}. \qed
\end{pf}

Algorithm~\ref{alg:guarantee_training} summarizes the complete procedure while reusing the training operations already defined in Algorithm~\ref{alg:e2e_training}.

\refstepcounter{algorithm}\label{alg:guarantee_training}
\begin{tcolorbox}[breakable,enhanced,title={Algorithm~\thealgorithm: E2E-DPC Training with Stage-Wise Conformal Tightening and Certification}]
{\small
\begin{algorithmic}[1]
\Require $\mathcal{D}_{\mathrm{tr}}$, $\mathcal{D}_{\mathrm{val}}$, $\mathcal{D}_{\mathrm{cal}}$, $\mathcal{D}_{\mathrm{cert}}$, $N_p$, $\alpha$, $\delta_{\mathrm{H}}$, and the initialization and hyperparameters of Algorithm~\ref{alg:e2e_training}
\State Execute Stage 1 of Algorithm~\ref{alg:e2e_training}
\For{each joint-training epoch $e$}
\State Compute recursive rollout errors on $\mathcal{D}_{\mathrm{val}}$ using the current $f_{\theta_x}$ and the recorded controls according to~\eqref{eq:recursive_rollout_model}--\eqref{eq:nonconformity_score}
\State Compute the provisional stage-wise sequence $\mathbf{r}^{(e)}$ using~\eqref{eq:cp_rank}--\eqref{eq:cp_margin}
\State Form $\boldsymbol{\eta}^{\mathrm{G},(e)}$ according to~\eqref{eq:augmented_eta_guarantee}
\State Execute the Stage 2 mini-batch updates of Algorithm~\ref{alg:e2e_training}, replacing the policy input by~\eqref{eq:tightening_conditioned_policy} and the original state bounds in $\mathcal{L}_{\mathrm{cons}}$ by~\eqref{eq:training_tightened_constraints}
\State Evaluate the validation losses and update $\beta$ using the same criteria as Algorithm~\ref{alg:e2e_training}
\EndFor
\State Select the final checkpoint and freeze $f_{\theta_x^*}$ and $f_{\theta_u^*}$
\State Compute the final fixed stage-wise sequence $\mathbf{r}^*$ on $\mathcal{D}_{\mathrm{cal}}$ using the frozen $f_{\theta_x^*}$ and~\eqref{eq:cp_rank}--\eqref{eq:cp_margin}
\For{each sample $i$ in $\mathcal{D}_{\mathrm{cert}}$}
\State Generate the control sequence using $f_{\theta_u^*}$ with $\mathbf{r}^*$, extract $\mathbf{u}_0^{(i)}$, compute the corresponding one-step nominal prediction $\hat{\mathbf{x}}_1^{(i)}$, and evaluate $I_i$ using~\eqref{eq:indicator_tight}
\EndFor
\State Compute $\widehat{\mu}_{\mathrm{tight}}$, $\underline{\mu}_{\mathrm{tight}}$, and $\underline{\mu}_{\mathrm{plant}}$ using~\eqref{eq:empirical_tight_probability}, \eqref{eq:hoeffding_tight_lb}, and~\eqref{eq:combined_plant_lb}
\Ensure Frozen $f_{\theta_x^*}$, frozen $f_{\theta_u^*}$, final tightening sequence $\mathbf{r}^*$, and the reported probabilistic certificate
\end{algorithmic}
}
\end{tcolorbox}

\section{Case Study}
\label{sec:case_study}

This section evaluates the proposed E2E-DPC framework and its probabilistic constraint-tightening mechanism in a high-fidelity building DR case study. The general pOCP in~\eqref{eq:pocp} is instantiated for a multi-zone residential building, allowing the E2E training procedure in Algorithm~\ref{alg:e2e_training} and the stage-wise conformal tightening and certification procedure in Algorithm~\ref{alg:guarantee_training} to be evaluated under realistic plant--model mismatch. Section~\ref{subsec:simulation_framework} introduces the building testbed and closed-loop simulation framework, Section~\ref{subsec:case_problem} specifies the corresponding control problem, and Section~\ref{subsec:constraint_tightening_analysis} evaluates the closed-loop effectiveness, certification behaviour, and tightening-feasibility trade-offs of the proposed framework.

\subsection{Building Testbed and Co-Simulation Framework}
\label{subsec:simulation_framework}

To ensure a realistic and reproducible evaluation, the co-simulation framework established in our previous study~\cite{xu2026input} is adapted to the DPC setting considered here. The building dynamics are simulated using EnergyPlus~\cite{crawley2001energyplus}, a widely adopted whole-building energy simulation program. The interaction between EnergyPlus and the Python-based experimental pipeline is implemented using Energym~\cite{scharnhorst2021energym}. Energym is an open-source building simulation library developed specifically for systematic and reproducible controller benchmarking, providing a common interface through which a Python program can transmit control actions to a building model and receive the resulting building states, energy measurements, and weather information.

The \texttt{ApartmentsThermal-v0} environment represents a four-story residential building located in Tarragona, Spain, with a total floor area of approximately $417.12~\mathrm{m}^2$ and an air volume of $1042.83~\mathrm{m}^3$~\cite{scharnhorst2021energym}. Each story corresponds to an individual apartment and contains two thermal zones, resulting in eight thermal zones in total. A single thermostat setpoint is assigned to each apartment, giving four independently controllable thermostat setpoints. Space heating and domestic hot-water production are supplied by a centralized water-to-water geothermal Heat Pump connected to a vertical ground heat exchanger. Indoor conditioning is provided through fan-coil units, with two units installed in each apartment.

Figure~\ref{fig:simulation_framework} illustrates the closed-loop co-simulation architecture. At each control instant, the EnergyPlus model returns the current building observations to the Python environment through the Energym interface. The DPC policy network $f_{\theta_u}$ receives the current building information together with the future electricity-price sequence and generates the complete future thermostat-setpoint sequence. For the constraint-tightened DPC variants, the stage-wise tightening sequence $\mathbf{r}^*$ developed in Section~\ref{sec:constraint_tightening} is additionally supplied to the policy. Consistent with the receding-horizon implementation described in Sections~\ref{sec:e2e_dpc_framework} and~\ref{sec:constraint_tightening}, only the first element of the generated control sequence is applied to the building. EnergyPlus then advances the building simulation and returns the updated observations, thereby completing the closed-loop control cycle. The learned dynamics model $f_{\theta_x}$ is required during offline model--policy training and conformal calibration, but is not part of the online control-action generation loop. Its predictions can therefore be retained for analysis and monitoring without affecting the applied control action.

\begin{figure*}[t]
    \centering
    \includegraphics[width=\textwidth]{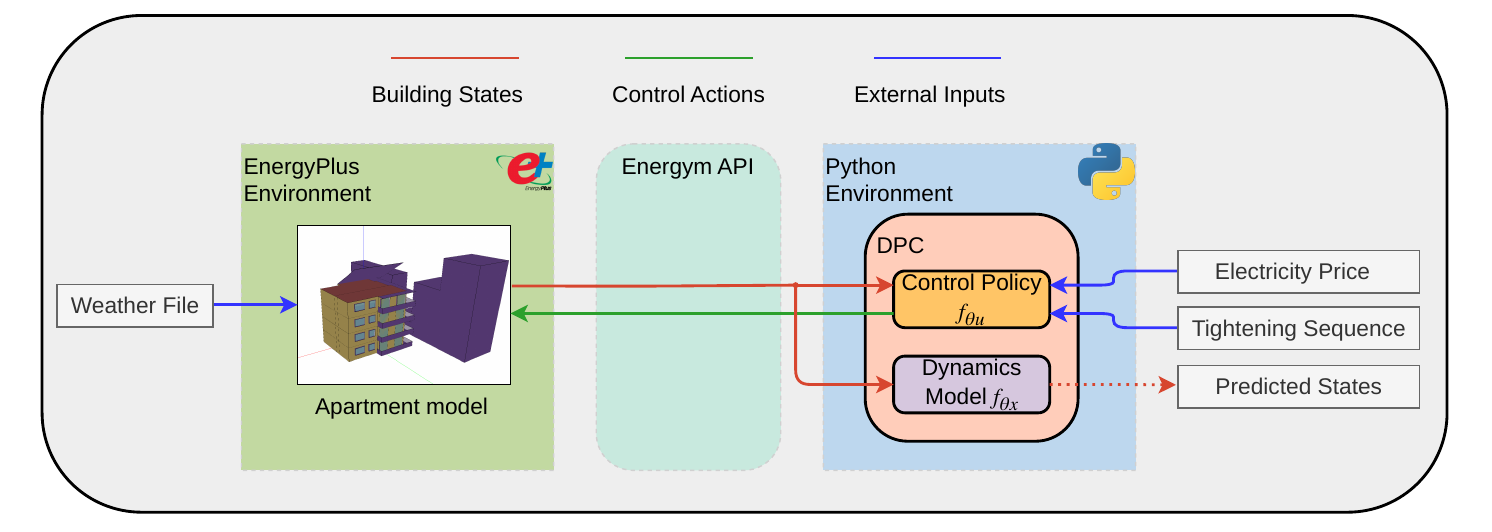}
    \caption{The co-simulation framework for online deployment of the DPC controller. The explicit control policy $f_{\theta_u}$ receives Building States (red line) from the EnergyPlus simulation and External Inputs (blue lines), including the future electricity-price sequence and, for the constraint-tightened controllers, the stage-wise tightening sequence $\mathbf{r}^*$. It generates the Control Actions (green line), which are returned to EnergyPlus through the Energym API. The dynamics model $f_{\theta_x}$ is used during offline training and conformal calibration but is not part of the online control-action generation loop; its predictions (dotted line) are retained for analysis and monitoring.}
    \label{fig:simulation_framework}
\end{figure*}

The building is operated under the Spanish residential time-of-use (TOU) tariff from Iberdrola Espa\~na's Plan Ahorro Solar~\cite{iberdrola_ahorro_solar}.
\begin{itemize}
    \item \textbf{Off-Peak / Valle (00:00--08:00):} \texteuro{}0.089/kWh on weekdays, with the off-peak rate applied throughout weekends.
    \item \textbf{Mid-Peak / Llano (08:00--10:00, 14:00--18:00, and 22:00--24:00):} \texteuro{}0.135/kWh on weekdays.
    \item \textbf{Peak / Punta (10:00--14:00 and 18:00--22:00):} \texteuro{}0.210/kWh on weekdays.
\end{itemize}
This structure incentivizes shifting flexible electricity use from the peak periods towards lower-price periods.

Among the weather files available in Energym, the \texttt{ESP\_CT\_Girona2} weather file is selected for the simulation experiments~\cite{scharnhorst2021energym}. All closed-loop evaluations are conducted from 6 to 12 January. The metadata of the corresponding EnergyPlus weather file identifies this interval as the ``Winter -- Week Nearest Min Temperature For Period'' extreme period, and this seven-day interval is therefore used as the winter evaluation period.

\subsection{Problem Formulation}
\label{subsec:case_problem}

The building control problem is formulated as a specific instance of the general pOCP in~\eqref{eq:pocp}. The objective is to minimize electricity expenditure over a finite prediction horizon of $N_p$ control steps while maintaining the zonal temperatures and thermostat setpoints within their prescribed operating constraints. Following Section~\ref{sec:pocp_formulation}, $\mathcal{K}=\{0,\ldots,N_p-1\}$ denotes the prediction-stage index set.

\subsubsection{Predictive Model and Policy Setup}

The predictive dynamics model $f_{\theta_x}$ adopts the EoT architecture introduced in Section~\ref{subsec:eot_dynamics}. It receives a fixed-length historical sequence of length $N_h$ containing the eight zonal temperatures (\texttt{Z01\_T}--\texttt{Z08\_T}), total facility electricity consumption (\texttt{Fa\_E\_All}), heat-pump return-water temperature (\texttt{Bd\_T\_HP\_return}), appliance electricity consumption (\texttt{Fa\_E\_Appl}), and the four thermostat setpoints. The model jointly predicts the complete 11-dimensional model-variable vector $\hat{\mathbf{y}}_k$, consisting of the eight zonal temperatures together with \texttt{Fa\_E\_All}, \texttt{Bd\_T\_HP\_return}, and \texttt{Fa\_E\_Appl}. The eight zonal temperatures form the constrained-state subvector $\hat{\mathbf{x}}_k$ of $\hat{\mathbf{y}}_k$, while the remaining three components are auxiliary outputs recursively propagated by the EoT model. Among them, the predicted total facility electricity consumption is additionally used to evaluate the economic objective.

The explicit control policy $f_{\theta_u}$ is implemented as multilayer perceptron (MLP). Consistent with the formulation in Section~3.2.2, the policy is evaluated once at the beginning of each prediction horizon and generates the complete future control sequence
\begin{equation}
\mathbf{U}=[\mathbf{u}_0^\top,\ldots,\mathbf{u}_{N_p-1}^\top]^\top \in \mathbb{R}^{4N_p}.
\end{equation}
For the DPC variants without constraint tightening, the policy input contains the current measured model-variable vector $\mathbf{y}_0\in\mathbb{R}^{11}$, corresponding to the prediction targets of $f_{\theta_x}$, together with the $N_p$-step future electricity-price sequence. At deployment, the constraint-tightened variants additionally receive the final stage-wise conformal tightening sequence $\mathbf{r}^*=[r_1^*,\ldots,r_{N_p}^*]^\top$, obtained from $\mathcal{D}_{\mathrm{cal}}$ after model selection as described in Section~\ref{sec:constraint_tightening}. Although the policy generates all $N_p$ future control actions simultaneously, only $\mathbf{u}_0$ is applied before the building state is measured again and a new control sequence is generated.

\subsubsection{Control Problem Specification}

The controller operates over a finite prediction horizon of $N_p$ control steps and at a 15-minute control interval. The variables in the general formulation of Section~\ref{sec:e2e_dpc_framework} are instantiated as follows:
\begin{itemize}
    \item The constrained state vector $\hat{\mathbf{x}}_k\in\mathbb{R}^{8}$ is the zonal-temperature subvector of $\hat{\mathbf{y}}_k$ and comprises the eight predicted zonal temperatures,
    \begin{equation}
    \hat{\mathbf{x}}_k=
    [\widehat{\texttt{Z01\_T}}_k,\ldots,\widehat{\texttt{Z08\_T}}_k]^\top.
    \end{equation}

    \item The control input vector $\mathbf{u}_k\in\mathbb{R}^{4}$ contains the four apartment thermostat setpoints,
    \begin{equation}
    \mathbf{u}_k=
    [\texttt{P1\_T\_Thermostat\_sp},\ldots,\texttt{P4\_T\_Thermostat\_sp}]^\top.
    \end{equation}

    \item In addition to the constrained zonal temperatures, $f_{\theta_x}$ predicts \texttt{Fa\_E\_All}, \texttt{Fa\_E\_Appl}, and \texttt{Bd\_T\_HP\_return} as auxiliary components of $\hat{\mathbf{y}}_k$ for recursive rollout. These variables are not subject to the thermal-comfort constraints. In particular, $\hat{E}_{k+1}$, corresponding to the \texttt{Fa\_E\_All} component of $\hat{\mathbf{y}}_{k+1}$, defines the economic stage cost associated with control stage $k$ as
    \begin{equation}
    \ell(\hat{\mathbf{y}}_{k+1},\mathbf{u}_k,\pi_k)=\pi_k\,\operatorname{ReLU}(\hat{E}_{k+1}),\qquad k\in\mathcal{K},
    \label{eq:case_stage_cost}
    \end{equation}
    where $\pi_k$ is the TOU electricity price associated with the corresponding control interval. The ReLU operator prevents the jointly trained dynamics model from reducing $\mathcal{L}_{\mathrm{obj}}$ through physically inadmissible negative energy predictions.

    \item The admissible state and input sets are defined by the thermal-comfort and thermostat operating limits
    \begin{equation}
    19\,^{\circ}\mathrm{C}\leq \hat{x}_{k,j}\leq24\,^{\circ}\mathrm{C},\qquad j=1,\ldots,8,\quad k=1,\ldots,N_p.
    \end{equation}
    and
    \begin{equation}
    16\,^{\circ}\mathrm{C}\leq u_{k,j}\leq26\,^{\circ}\mathrm{C},\qquad j=1,\ldots,4,\quad k\in\mathcal{K}.
    \end{equation}
    respectively.

    \item In this case study, the time-varying horizon-level parameter supplied to the policy is the future TOU electricity-price sequence. The state and input bounds remain fixed throughout the experiments and are therefore not explicitly concatenated to the policy input. For the constraint-tightened formulation, the policy is additionally conditioned on $\mathbf{r}^*$, and the nominal zonal-temperature constraints at prediction stage $k$ are tightened according to
    \begin{equation}
    \mathbf{x}_{\ell}+r_k^*\mathbf{1}_{8}
    \leq
    \hat{\mathbf{x}}_k
    \leq
    \mathbf{x}_{h}-r_k^*\mathbf{1}_{8},
    \qquad k=1,\ldots,N_p,
    \end{equation}
    while the original input set $\mathcal{U}$ remains unchanged, as established in Section~\ref{sec:conformal_tightening}.
    
\end{itemize}

\subsection{Constraint Tightening and Certification Analysis}
\label{subsec:constraint_tightening_analysis}

This subsection evaluates the proposed framework through three complementary analyses. First, a controlled stress test examines the closed-loop effect of conformal tightening across MPC and DPC controllers. Second, E2E-DPC and Conventional DPC are compared across multiple loss-weight configurations to assess their stage-wise calibrated error radii and plant-level constraint-satisfaction certificates. Finally, the effect of prediction-horizon length on tightening feasibility is examined to reveal the associated trade-off and the role of multi-step identification.

All experiments use a historical window length of $N_h=10$. The EoT dynamics model uses an embedding dimension of 64, one attention head, a feed-forward dimension of 64, one encoder layer, and a dropout rate of 0.1. The DPC policy is implemented as an MLP with two hidden layers of 64 neurons. A batch size of 256 and a maximum of 2000 training epochs are used throughout. For conventional DPC, the dynamics model is first trained using the multi-step identification loss without an additional constraint penalty and is subsequently frozen during policy training, where the constraint-loss weight is fixed at $w_{\mathrm{cons}}=10$. For E2E-DPC, $w_{\mathrm{id}}=1$ and $w_{\mathrm{cons}}=10$ are fixed, while $w_{\mathrm{ms}}$ and $w_{\mathrm{obj}}$ are varied according to the experiment under consideration.

For all constraint-tightened experiments, split-conformal calibration uses a coverage level of $1-\alpha=0.99$, and the independent Hoeffding certificate is evaluated at a confidence level of $1-\delta_H=0.99$.

\subsubsection{Constraint-Tightening Stress Test Across MPC and DPC}
\label{subsubsec:stress_test}

For the constraint-tightening stress test, six controllers are organized into three matched pairs: Online MPC and Tightened Online MPC, Conventional DPC and Tightened Conventional DPC, and E2E-DPC and Tightened E2E-DPC. The Conventional DPC baseline follows the original two-stage formulation of Drgo\v{n}a et al.~\cite{drgovna2022differentiable}, in which the dynamics model is identified independently and subsequently frozen during policy optimization. E2E-DPC is trained according to Algorithm~\ref{alg:e2e_training}, while Tightened E2E-DPC follows Algorithm~\ref{alg:guarantee_training}. The stage-wise conformal tightening developed in Section~\ref{sec:conformal_tightening} is also applied to the Online MPC and Conventional DPC baselines. Both Online MPC variants retain non-negative slack variables on the predicted temperature constraints to preserve optimization feasibility, which is particularly relevant when tightening substantially reduces the admissible nominal state region.

All six controllers use $N_p=4$, corresponding to a one-hour prediction horizon. Within each pair, the tightened and untightened variants use the same model architecture, prediction horizon, and controller-specific loss settings, such that the introduction of conformal tightening is the principal methodological difference. For the DPC stress-test configurations, Conventional DPC uses $w_{\mathrm{cons}}=10$ and $w_{\mathrm{obj}}=1.5$, while E2E-DPC uses $w_{\mathrm{id}}=1$, $w_{\mathrm{ms}}=0.25$, $w_{\mathrm{cons}}=10$, and $w_{\mathrm{obj}}=1.5$. These settings are deliberately selected to form a constraint stress test, encouraging operation close to the admissible temperature boundaries so that plant--model mismatch produces observable violations in the untightened controllers.

Closed-loop performance is evaluated using the total electricity bill, thermal constraint violation, maximum temperature-bound violation, and online decision time. The reported thermal violation is evaluated against the original admissible state set $\mathcal{X}$ in~\eqref{eq:pocp_state_constraints}, rather than the tightened nominal sets defined in~\eqref{eq:tightened_state_set}, and is accumulated in degree-hours. Decision time is reported as mean $\pm$ standard deviation over all control instants. Table~\ref{tab:stress_test_results} summarizes these scalar metrics, while Figure~\ref{fig:stress_test_temperatures} shows the corresponding seven-day zonal-temperature trajectories.

\begin{table*}[t]
\centering
\caption{Closed-loop performance of the six controllers in the constraint-tightening stress test.}
\label{tab:stress_test_results}
\begin{tabular}{lrrrr}
\hline
Controller & Bill (\texteuro{}) & Violation ($^\circ$C$\cdot$h) & Max. violation ($^\circ$C) & Decision time (s) \\
\hline
Online MPC & 222.65 & 15.301 & 1.081 & $2.376 \pm 1.479$ \\
Tightened Online MPC & 229.21 & 0.090 & 0.254 & $12.091 \pm 14.313$ \\
Conventional DPC & 196.83 & 55.426 & 1.198 & $0.000450 \pm 0.002334$ \\
Tightened Conventional DPC & 246.80 & 0.000 & 0.000 & $0.000439 \pm 0.001900$ \\
E2E-DPC & 199.65 & 277.057 & 1.782 & $0.000427 \pm 0.001888$ \\
Tightened E2E-DPC & 248.66 & 0.000 & 0.000 & $0.000420 \pm 0.002238$ \\
\hline
\end{tabular}
\end{table*}

\begin{figure*}[t]
    \centering
    \includegraphics[width=\textwidth]{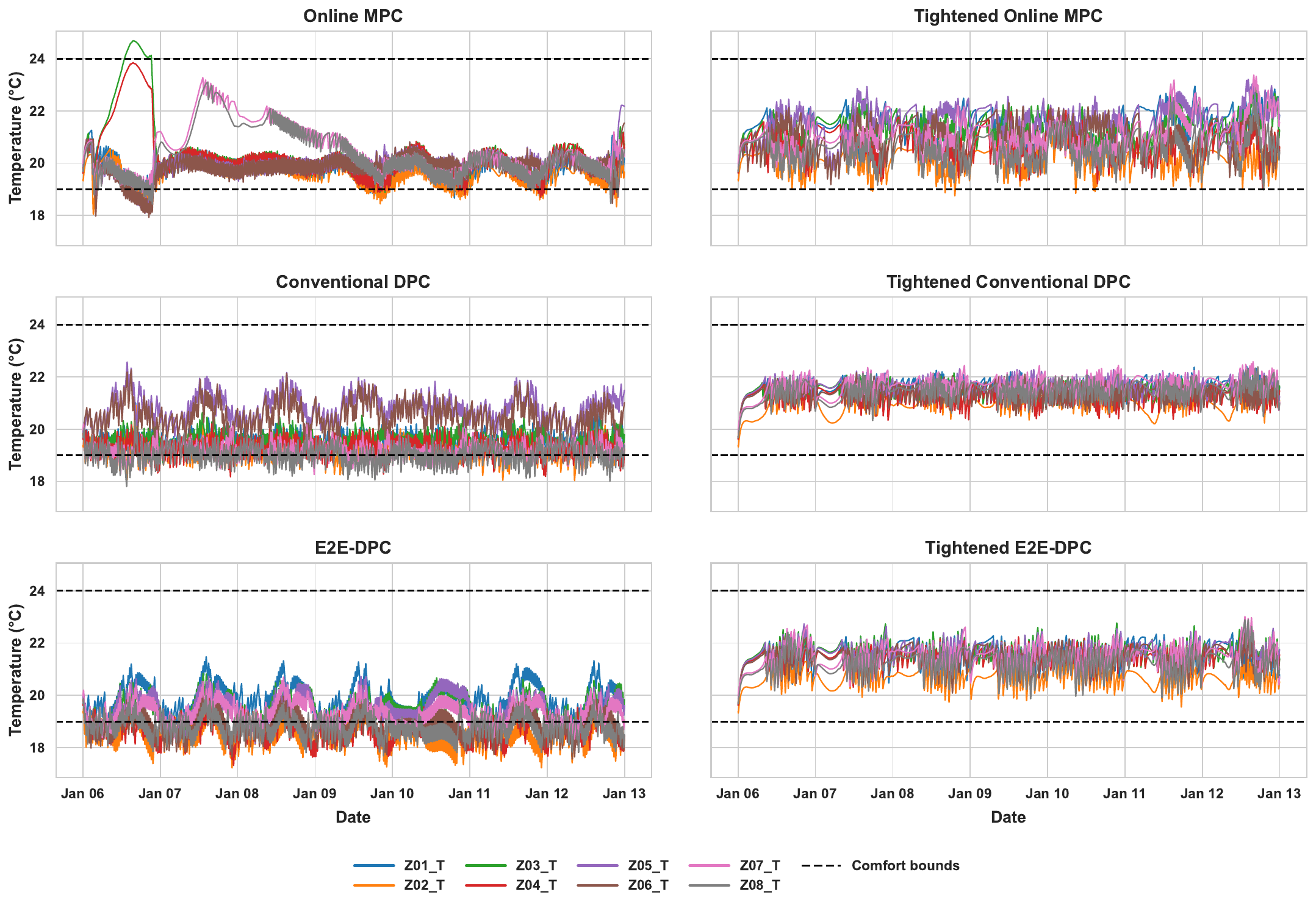}
    \caption{Closed-loop zonal-temperature trajectories for the six controllers in the constraint-tightening stress test with $N_p=4$. The dashed lines denote the $19$--$24\,^{\circ}\mathrm{C}$ thermal-comfort bounds. The left column shows the untightened controllers and the right column shows their constraint-tightened counterparts.}
    \label{fig:stress_test_temperatures}
\end{figure*}

Table~\ref{tab:stress_test_results} and Figure~\ref{fig:stress_test_temperatures} show a consistent effect of constraint tightening across the three controller families. All untightened controllers violate the original thermal constraints, accumulating $15.301$, $55.426$, and $277.057\,^{\circ}\mathrm{C}\!\cdot\!\mathrm{h}$ for Online MPC, Conventional DPC, and E2E-DPC, respectively. After tightening, both DPC variants achieve zero thermal violation over the seven-day evaluation, while Tightened Online MPC reduces the violation from $15.301$ to $0.090\,^{\circ}\mathrm{C}\!\cdot\!\mathrm{h}$, a reduction of approximately $99.4\%$. The small residual violation of Tightened Online MPC is compatible with the probabilistic nature of the conformal coverage and its soft-constrained MPC formulation, in which slack variables may relax the tightened nominal constraints to preserve optimization feasibility.

The improved constraint satisfaction is accompanied by higher electricity expenditure. Relative to the corresponding untightened controllers, the bill increases by approximately $2.9\%$, $25.4\%$, and $24.6\%$ for Online MPC, Conventional DPC, and E2E-DPC, respectively. The two DPC formulations nevertheless remain economically comparable, with their untightened bills differing by approximately $1.4\%$ and their tightened bills by less than $0.8\%$. Thus, the central result of this stress test is that conformal tightening substantially suppresses plant-level thermal violations across all three control paradigms, at the expected cost of restricting economically aggressive operation near the original constraint boundaries.

The online computation results further distinguish explicit DPC from optimization-based MPC. The four DPC variants require approximately $0.4$~ms per control decision, with essentially no additional online cost introduced by tightening. In contrast, Online MPC requires $2.38\pm1.48$~s per update, increasing to $12.09\pm14.31$~s for Tightened Online MPC. Although both remain well below the 15-minute control interval, the DPC policies avoid online finite-horizon optimization by transferring this computation to offline training. This separation may become increasingly relevant for larger online optimization problems; our previous MPC study observed substantially increasing solution times as the prediction horizon was extended~\cite{xu2026input}. The present $N_p=4$ experiment, however, is interpreted only as evidence of reduced online computation rather than as a direct scalability benchmark.

The calibrated stage-wise tightening sequences used in this stress test are
\begin{subequations}
\label{eq:stress_test_radii}
\begin{align}
\mathbf{r}^{*}_{\mathrm{MPC}}
&=
[1.123,\;1.631,\;2.040,\;2.351]^\top\,^{\circ}\mathrm{C},
\label{eq:stress_test_radii_mpc}\\
\mathbf{r}^{*}_{\mathrm{CDPC}}
&=
[1.665,\;1.840,\;1.952,\;2.073]^\top\,^{\circ}\mathrm{C},
\label{eq:stress_test_radii_cdpc}\\
\mathbf{r}^{*}_{\mathrm{E2E}}
&=
[1.173,\;1.725,\;2.091,\;2.351]^\top\,^{\circ}\mathrm{C}.
\label{eq:stress_test_radii_e2e}
\end{align}
\end{subequations}

In this stress-test configuration, E2E-DPC requires a substantially smaller first-stage margin than Conventional DPC, with $r_1^*$ decreasing from $1.665\,^{\circ}\mathrm{C}$ to $1.173\,^{\circ}\mathrm{C}$, corresponding to a reduction of approximately $29.5\%$. This observation is consistent with the loss design in~\eqref{eq:e2e_loss}--\eqref{eq:dpc_loss}: while the Conventional-DPC dynamics model is identified independently using the multi-step prediction objective before being frozen, E2E-DPC additionally employs the one-step identification term $\mathcal{L}_{\mathrm{id}}$ in~\eqref{eq:id_loss} and jointly refines the dynamics representation during policy optimization. The smaller $r_1^*$ enlarges the first-stage tightened set $\mathcal{X}_1^{\mathrm{tight},*}$ and is therefore favourable for the Hoeffding indicator in~\eqref{eq:indicator_tight}, which evaluates the first receding-horizon transition. Since the present experiment is intended as a targeted stress test of the tightening mechanism rather than a nominal performance ranking among the controller families, the closed-loop trajectories are emphasized here, while the stage-wise margins and the corresponding Hoeffding certificates are analyzed systematically over the extensive hyperparameter sweep in Section~\ref{subsubsec:e2e_certification}.

\subsubsection{Effect of End-to-End Training on Tightening and Certification}
\label{subsubsec:e2e_certification}

Building on the single-configuration observation in Section~\ref{subsubsec:stress_test}, this experiment examines whether the tightening and certification characteristics of E2E-DPC persist across a broader range of loss-weight settings. The prediction horizon is fixed at $N_p=4$ to focus the comparison on the training paradigm. For E2E-DPC, $w_{\mathrm{ms}}\in\{0.25,0.5,1,2,4\}$ is combined with the 16 objective-loss weights
\[
\begin{aligned}
\mathcal{W}_{\mathrm{obj}}=\{&
0,0.05,0.1,0.25,0.5,0.75,1,1.25,\\
&1.5,1.75,2,2.5,3,3.5,4,5\},
\end{aligned}
\]
giving 80 configurations. Conventional DPC is evaluated using the same 16 values in $\mathcal{W}_{\mathrm{obj}}$. At the fixed prediction horizon, its pretrained dynamics model is frozen before policy optimization and therefore yields one fixed stage-wise tightening sequence, whereas the jointly trained E2E dynamics model yields configuration-dependent sequences.

Figure~\ref{fig:stagewise_tightening_h4} compares the stage-wise calibrated error radii obtained under the two training paradigms. All 80 E2E-DPC configurations yield smaller $r_1^*$ and $r_2^*$ than Conventional DPC. The mean tightening margin at the first prediction stage decreases from $1.665\,^{\circ}\mathrm{C}$ to $1.149\,^{\circ}\mathrm{C}$, corresponding to a reduction of approximately $31.0\%$, while the mean second-stage reduction is approximately $9.4\%$. This advantage weakens later in the recursive rollout: only 29/80 and 7/80 E2E-DPC configurations have smaller third- and fourth-stage calibrated error radii, respectively.

This stage-dependent behaviour is consistent with the loss design in~\eqref{eq:e2e_loss}--\eqref{eq:dpc_loss}. The one-step identification term $\mathcal{L}_{\mathrm{id}}$ in~\eqref{eq:id_loss} anchors local predictive fidelity, while $\mathcal{L}_{\mathrm{ms}}$ in~\eqref{eq:ms_loss} penalizes prediction-error accumulation over recursive rollouts; the control-oriented terms further refine the shared dynamics representation during joint training. The results therefore show a consistent early-stage reduction in calibrated prediction error under E2E training, while this advantage does not necessarily persist throughout the prediction horizon.

\begin{figure}[t]
    \centering
    \includegraphics[width=0.95\columnwidth]{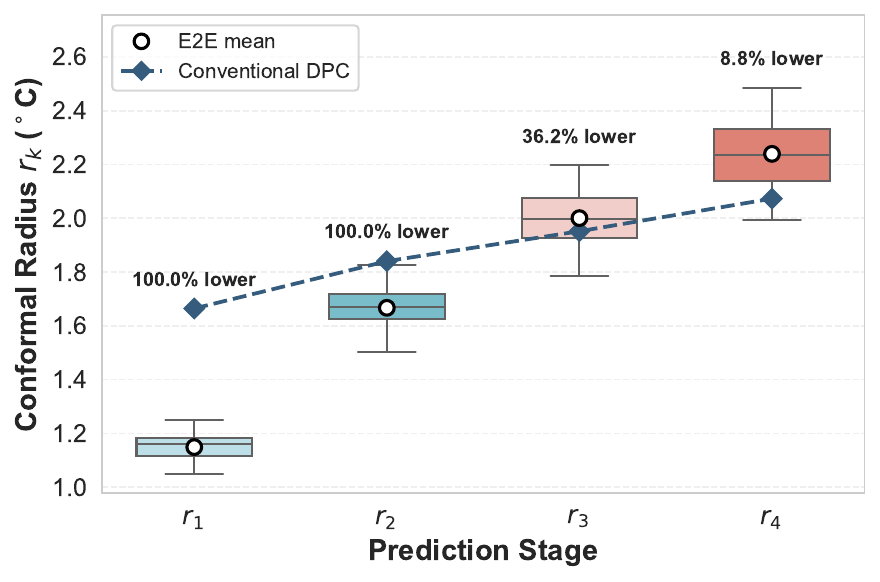}
    \caption{Stage-wise calibrated error radii at $N_p=4$, which are equivalently the constraint-tightening margins applied in~\eqref{eq:tightened_state_set}. Boxplots summarize the 80 E2E-DPC hyperparameter configurations, while the dashed curve denotes the fixed Conventional-DPC tightening sequence. White circles indicate the mean E2E-DPC radius at each stage, and the annotations report the percentage of E2E-DPC configurations with a smaller radius than Conventional DPC.}
    \label{fig:stagewise_tightening_h4}
\end{figure}

Before evaluating the certificates, a configuration is termed \emph{tightening-feasible} if the tightened set $\mathcal{X}^{\mathrm{tight},*}_k$ defined in~\eqref{eq:tightened_state_set} remains non-empty at every prediction stage.

The formal indicator $I_i$ in~\eqref{eq:indicator_tight} is deliberately stringent: a certification sample succeeds only when the first nominal temperatures of all eight zones satisfy the tightened state constraints and all four thermostat inputs satisfy $\mathcal{U}$. To facilitate analysis of this joint criterion, eight additional zone-wise diagnostics are evaluated, each requiring one zonal temperature and its associated apartment thermostat input to satisfy their corresponding first-stage constraints.

A separate Hoeffding lower bound is computed for each zone-wise diagnostic. To retain a simultaneous confidence level of $1-\delta_H=0.99$ across all eight bounds, a Bonferroni correction assigns a failure probability of $\delta_H/8$ to each bound, giving a per-zone confidence of $0.99875$. The resulting plant-level lower bounds $\underline{\mu}_{\mathrm{plant},z}$ are summarized by
\begin{equation}
\bar{\mu}_{\mathrm{zone}}^{\mathrm{LB}}
=
\frac{1}{8}
\sum_{z=1}^{8}
\underline{\mu}_{\mathrm{plant},z}.
\label{eq:mean_zone_certificate}
\end{equation}
This quantity is reported only as a diagnostic; the formal guarantee remains the joint plant-level certificate in~\eqref{eq:combined_plant_lb}.

Table~\ref{tab:certificate_comparison_h4} shows that Conventional DPC remains tightening-feasible for all 16 configurations, while 76 of the 80 E2E-DPC configurations retain non-empty tightened sets throughout the horizon. The difference in certification is substantially larger. Conventional DPC has a median overall plant-level lower bound of zero and a maximum of only $0.0027$, despite a median zone-average bound of $0.419$. In contrast, E2E-DPC achieves a median overall lower bound of $0.814$, a maximum of $0.864$, and a median zone-average bound of $0.918$; 47 of the 80 configurations achieve $\underline{\mu}_{\mathrm{plant}}\geq0.80$. The near-zero overall Conventional-DPC certificate reflects the joint event in~\eqref{eq:indicator_tight}, rather than any compounding of the $0.99$ conformal coverage or Hoeffding confidence across zones. The median is reported because the certificate distribution contains a pronounced low-certificate tail, while the number exceeding $0.80$ indicates that the E2E advantage is not confined to a small number of selected configurations.

\begin{table}[t]
\centering
\caption{Tightening-feasibility and certification results at $N_p=4$.}
\label{tab:certificate_comparison_h4}
\small
\setlength{\tabcolsep}{3pt}
\begin{tabular}{lcc}
\hline
Metric & Conventional DPC & E2E-DPC \\
\hline
Tightening-feasible & 16/16 & 76/80 \\
Median $\underline{\mu}_{\mathrm{plant}}$ & 0.000 & 0.814 \\
Max. $\underline{\mu}_{\mathrm{plant}}$ & 0.0027 & 0.864 \\
Median $\bar{\mu}_{\mathrm{zone}}^{\mathrm{LB}}$ & 0.419 & 0.918 \\
$\underline{\mu}_{\mathrm{plant}}\geq0.80$ & 0 & 47 \\
\hline
\end{tabular}
\end{table}

Taken together, the results link the E2E training design in Section~\ref{sec:e2e_dpc_framework} directly to the certification mechanism in Section~\ref{sec:constraint_tightening}. E2E-DPC consistently produces a smaller tightening margin at the first prediction stage, $r_1^*$, while the formal certificate in~\eqref{eq:indicator_tight}--\eqref{eq:combined_plant_lb} provides a lower bound on plant-level constraint satisfaction for the receding-horizon transition actually applied. A smaller $r_1^*$ enlarges $\mathcal{X}_1^{\mathrm{tight},*}$ and makes the certified nominal event less restrictive, which is consistent with the substantially higher certified lower bounds obtained by E2E-DPC. At later prediction stages, however, the advantage in calibrated error radius diminishes as recursive prediction errors accumulate. Section~\ref{subsubsec:horizon_tradeoff} therefore examines the resulting trade-off between stronger certification of plant-level constraint satisfaction and preservation of nonempty tightened state sets as the prediction horizon increases, together with the role of $\mathcal{L}_{\mathrm{ms}}$ in mitigating this trade-off.

\subsubsection{Trade-off Between Horizon-Dependent Uncertainty Calibration and Tightened-Set Feasibility}
\label{subsubsec:horizon_tradeoff}

To examine how the tightening behaviour observed in Section~\ref{subsubsec:e2e_certification} changes with the prediction horizon, $N_p\in\{4,6,8\}$ is considered, corresponding to look-ahead periods of 1, 1.5, and 2 hours. Using the tightening-feasibility criterion introduced in Section~\ref{subsubsec:e2e_certification}, Table~\ref{tab:horizon_feasibility} summarizes the results.

\begin{table}[t]
\centering
\caption{Tightened-set feasibility across prediction horizons.}
\label{tab:horizon_feasibility}
\begin{tabular}{ccc}
\hline
$N_p$ & Conventional DPC & E2E-DPC \\
\hline
4 & 16/16 (100\%) & 76/80 (95.0\%) \\
6 & 16/16 (100\%) & 35/80 (43.8\%) \\
8 & 16/16 (100\%) & 4/80 (5.0\%) \\
\hline
\end{tabular}
\end{table}

A clear difference emerges between the two training paradigms. Conventional DPC remains tightening-feasible for all tested configurations, whereas the feasible fraction of E2E-DPC decreases from $95.0\%$ at $N_p=4$ to $5.0\%$ at $N_p=8$. This behaviour is consistent with their different training objectives. Conventional DPC identifies $f_{\theta_x}$ independently of the downstream control objective, with strong emphasis on recursive predictive consistency before the model is frozen. E2E-DPC instead jointly balances $\mathcal{L}_{\mathrm{id}}$, $\mathcal{L}_{\mathrm{ms}}$, $\mathcal{L}_{\mathrm{cons}}$, and $\mathcal{L}_{\mathrm{obj}}$ through~\eqref{eq:e2e_loss}--\eqref{eq:dpc_loss}. As shown in Section~\ref{subsubsec:e2e_certification}, this joint formulation produces substantially smaller tightening margins at the early prediction stages and markedly higher certified lower bounds on plant-level constraint satisfaction. However, as the prediction horizon increases, the need to balance prediction-oriented and control-oriented objectives can allow larger recursive errors to emerge at later stages, causing the corresponding tightened state sets to become empty more frequently. Relative to Conventional DPC, the observed trade-off is therefore between stronger plant-level constraint-satisfaction certification and the ability to preserve nonempty tightened state sets as the prediction horizon increases.

The role of $\mathcal{L}_{\mathrm{ms}}$ in this trade-off is illustrated in Figure~\ref{fig:wms_feasibility}. At $N_p=6$, increasing $w_{\mathrm{ms}}$ from $0.25$ to $4$ increases the number of tightening-feasible E2E-DPC configurations from 0/16 to 14/16; at $N_p=8$, only $w_{\mathrm{ms}}=4$ retains any feasible configurations, with 4/16. Consistently, the mean terminal-stage calibrated error radius decreases from $2.546$ to $2.358\,^{\circ}\mathrm{C}$ at $N_p=6$ and from $2.698$ to $2.420\,^{\circ}\mathrm{C}$ at $N_p=8$. These results agree with the role of $\mathcal{L}_{\mathrm{ms}}$ in~\eqref{eq:ms_loss}, which explicitly penalizes prediction-error accumulation during recursive rollout and therefore helps preserve nonempty tightened state sets as the prediction horizon increases.

\begin{figure}[t]
    \centering
    \includegraphics[width=0.95\columnwidth]{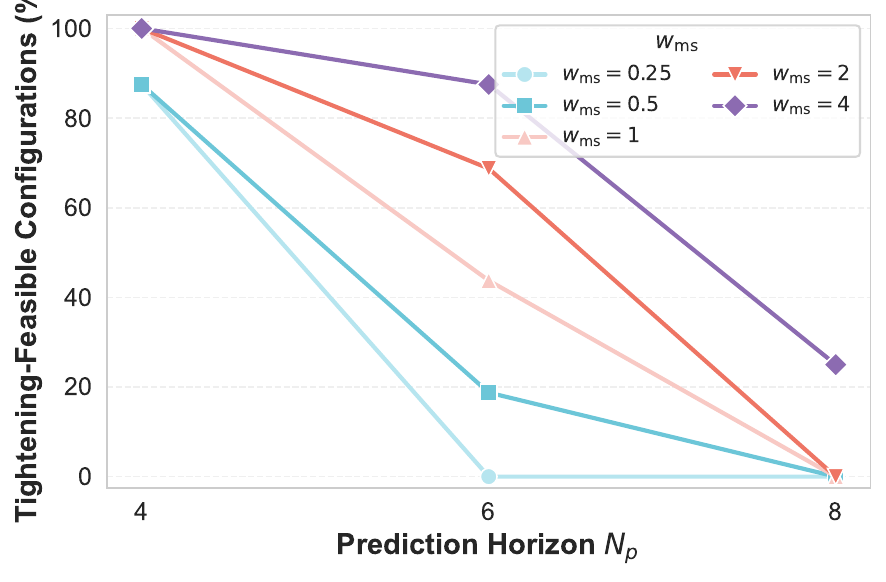}
    \caption{Effect of $w_{\mathrm{ms}}$ on E2E-DPC tightened-set feasibility as the prediction horizon increases. Each point gives the fraction of the 16 explored $w_{\mathrm{obj}}$ settings for which the tightened state set remains nonempty at every prediction stage for the corresponding $(N_p,w_{\mathrm{ms}})$ pair.}
    \label{fig:wms_feasibility}
\end{figure}

The results also reveal a second trade-off arising from the tightening design itself. In the proposed framework, each $r_k^*$ is calibrated from the recursive $k$-step prediction error through~\eqref{eq:nonconformity_score}--\eqref{eq:cp_margin}. The resulting horizon-indexed sequence therefore retains the observed growth of prediction uncertainty along the recursive rollout. Although only the first control action is applied, the later-stage tightened constraints shape the complete virtual trajectory and control sequence during policy synthesis, and can consequently influence the first action. This horizon-indexed structure is also naturally compatible with future recursive-feasibility and stability extensions based on trajectory shifting, invariance, terminal ingredients, or contraction arguments, as discussed in Section~\ref{sec:conformal_tightening}. Its cost is increased conservatism at later stages: as demonstrated above, sufficiently large $r_k^*$ can eventually render $\mathcal{X}_k^{\mathrm{tight},*}$ empty.

Chee et al.~\cite{chee2024uncertainty} make a different design choice. Their online weighted CP procedure computes a current uncertainty vector and forms a single dynamically tightened state set, without propagating the uncertainty magnitude with the look-ahead stage; the authors note that this choice helps promote feasibility in their finite-horizon formulation. In contrast, the proposed method calibrates the stage-wise sequence $\mathbf{r}^*=[r_1^*,\ldots,r_{N_p}^*]^\top$ directly from recursive prediction errors, thereby preserving the observed horizon dependence of plant--model mismatch. Although this can make later-stage tightened sets increasingly restrictive, the complete sequence shapes the virtual trajectory used for policy synthesis, allowing horizon-dependent uncertainty to influence the generated control sequence. Moreover, the resulting horizon-indexed constraint structure is naturally compatible with future recursive-feasibility and stability extensions based on trajectory shifting, invariance, terminal ingredients, or contraction arguments, as discussed in Section~\ref{sec:conformal_tightening}. The comparison therefore reflects a design trade-off: avoiding horizon-wise uncertainty propagation favours feasibility, whereas stage-wise calibration retains information about recursive uncertainty evolution and preserves a structured finite-horizon tightening formulation. Importantly, the formal plant-level certificate remains based only on the first executed transition through~\eqref{eq:indicator_tight}--\eqref{eq:combined_plant_lb}; the later-stage radii shape policy synthesis rather than being compounded into the certified probability bound.

\section{Conclusion and Future Work}
\label{sec:conclusion}

This paper proposed an E2E-DPC framework that jointly optimizes the learned dynamics model and explicit control policy while incorporating data-driven probabilistic constraint tightening. The E2E objective combines local identification, recursive multi-step prediction, constraint satisfaction, and economic performance, allowing the dynamics representation to be shaped by both prediction fidelity and the downstream control task. To account for plant--model mismatch, stage-wise calibrated error radii are obtained from recursive prediction errors and used as constraint-tightening margins throughout the finite-horizon policy rollout. Combined with independent Hoeffding certification, this yields a high-confidence plant-level guarantee on satisfaction of the original state and input constraints for the receding-horizon transition actually applied.

The case study demonstrates the complementary benefits of these two components. Constraint tightening substantially suppresses closed-loop thermal violations across Online MPC, Conventional DPC, and E2E-DPC, while the explicit DPC policies retain very low online decision times. Compared with Conventional DPC, E2E training consistently produces smaller tightening margins at the early prediction stages and substantially higher certified lower bounds on plant-level constraint satisfaction; at $N_p=4$, the median certified lower bound increases from 0 for Conventional DPC to 0.814 for E2E-DPC. At the same time, the experiments reveal an important trade-off: as the prediction horizon increases, larger late-stage calibrated errors can render the tightened state sets empty more frequently, although increasing the contribution of $\mathcal{L}_{\mathrm{ms}}$ substantially mitigates this effect by suppressing recursive prediction-error accumulation.

The probabilistic guarantee remains subject to the calibration--deployment exchangeability and certification-sample assumptions introduced in Section~\ref{sec:constraint_tightening}, and the current validation is limited to a high-fidelity simulation environment. Future work will therefore investigate less conservative uncertainty-calibration and constraint-tightening mechanisms that retain the horizon-dependent error information of the present formulation while reducing conservatism and improving the ability to preserve nonempty tightened state sets as the prediction horizon increases. Another important direction is scaling the framework to multi-building and district-level DR, where the low online computational cost of explicit DPC policies may provide increasing advantages over repeated online optimization. Finally, for applications in which closed-loop stability or recursive-feasibility properties are essential, the horizon-indexed tightening structure developed here provides a natural basis for incorporating trajectory-shifting, invariance, terminal, or contraction-based arguments into future theoretical extensions.

\section*{CRediT authorship contribution statement}

Kaipeng Xu: Conceptualization, Methodology, Software, Validation, Formal analysis, Investigation, Data curation, Visualization, Writing -- original draft, Writing -- review \& editing, Project administration. Zhuo Zhi: Software, Writing -- review \& editing. Ruixuan Zhao: Writing -- review \& editing. Keyue Jiang: Writing -- review \& editing.

\section*{Declaration of competing interest}

The authors declare that they have no known competing financial interests or personal relationships that could have appeared to influence the work reported in this paper.

\section*{Data availability}

Data and code supporting the findings of this study will be made available on reasonable request.

\section*{Declaration of generative AI and AI-assisted technologies in the manuscript preparation process}

During the preparation of this work, the authors used OpenAI ChatGPT and Google Gemini for manuscript organization, language refinement, writing support, and assistance with code development and debugging. The authors reviewed and edited the output as needed and take full responsibility for the content of the published article.

\bibliographystyle{elsarticle-num}
\bibliography{literature}

\end{document}